\title{Game-to-Real Gap: Quantifying the Effect of Model Misspecification in Network Games}
\author{{Bryce L. Ferguson, Chinmay Maheshwari, Manxi Wu, and Shankar Sastry}
\thanks{B. L. Ferguson (bryce.l.ferguson@dartmouth.edu) is with the Thayer School of Engineering at Dartmouth College.}
\thanks{C. Maheshwari (chinmay\_maheshwari@jhu.edu) is with the Department of Electrical and Computer Engineering at Johns Hopkins University.}
\thanks{M. Wu (manxiwu@berkeley.edu) is with the Department of Civil and Environmental Engineering, and S. Sastry(shankar\_sastry@berkeley.edu) is with the Department of Electrical Engineering and Computer Sciences at the University of California, Berkeley.}%
}
\begin{document}
\maketitle 

\begin{abstract}
    Game-theoretic models and solution concepts provide rigorous tools for predicting collective behavior in multi-agent systems. In practice, however, different agents may rely on different game-theoretic models to design their strategies. As a result, when these heterogeneous models interact, the realized outcome can deviate substantially from the outcome each agent expects based on its own local model.
    In this work, we introduce the game-to-real gap, a new metric that quantifies the impact of such model misspecification in multi-agent environments. The game-to-real gap is defined as the difference between the utility an agent actually obtains in the multi-agent environment (where other agents may have misspecified models) and the utility it expects under its own game model.
    Focusing on quadratic network games, we show that misspecifications in either (i) the external shock or (ii) the player interaction network can lead to arbitrarily large game-to-real gaps. We further develop novel network centrality measures that allow exact evaluation of this gap in quadratic network games. Our analysis reveals that standard network centrality measures fail to capture the effects of model misspecification, underscoring the need for new structural metrics that account for this limitation. Finally, through illustrative numerical experiments, we show that existing centrality measures in network games may provide a counterintuitive understanding of the impact of model misspecification.
\end{abstract}




\section{Introduction}
The efficacy of {decision-making and control} algorithms within multi-agent settings is conditioned on the intentions, preferences, and capabilities of each individual agent.
In settings like autonomous driving, vehicles must predict the trajectories of other vehicles or pedestrians to ensure safe maneuvers~\cite{paccagnanNashWardropEquilibria2019, seff2023motionlm}.
When making investment and purchasing decisions, firms must forecast demand and the investment of other firms~\cite{candoganOptimalPricingPresence2010,acemogluNetworksShocksSystemic2015}.
When the number of agents is greater than two, {predictions} about individual agents are not sufficient~\cite{seff2023motionlm,albrecht2018autonomous}, instead predictions of collective behavior are needed.
Additionally, each agent may perform their own prediction; if these predictions are misaligned, it will degrade performance. 

Game theory regularly surfaces as a tool across many disciplines to model and reason about multi-agent interactions~\cite{schotter1980economics,Marden2014}.
Two common approaches to understanding the behavior of multi-agent systems are game-theoretic learning--in which players iteratively update their actions or policies in response to one another~\cite{fudenbergTheoryLearningGames1998}--and game-theoretic planning--in which an agent computes an equilibrium based on its conjectured game model, which is then used to compute its strategy~\cite{lavalle2000robot}.
In this work, we focus on the latter framework to capture decisions made by engineers and autonomous agents with significant lead time but with little opportunity to revise after deployment, e.g., an autonomous race-car which must develop a defending and overtaking policy in advance while preparing for race day event~\cite{kalaria2024alpha}, or distributed generator and storage facilities participating in smart grid demand management programs \cite{atzeniDemandSideManagementDistributed2013}.
Traditional game theory depends on fine assumptions about other players and can be sensitive to changes.
It is our intention to provide guarantees on the robustness of game-theoretical solutions to enable engineers to more confidently make the leap from theory to use in reality. 
To this end, we seek to develop formal analysis methodologies that will aid in promoting design techniques within multi-agent systems that are robust to mischaracterizations of other agents' intentions or capabilities.

We formalize the idea of game-theoretic planning by assigning each agent a predictive model (consisting of a game and a solution concept) with which they may leverage optimization techniques to devise their control policy.
We are interested in the case where the predictions provided by these models are inaccurate and heterogeneous among players due to mischaracterizations of one another's preferences and capabilities.
To quantify how sensitive an agent's predicted performance is to these types of misspecifications, we introduce the \emph{\gapname}, defined as the difference between an agent's predicted and realized cost.
Our goal is to characterize the gap and identify features around which uncertainty or misspecification may contribute greater loss in performance.
As a first step towards this goal, we focus on the class of network games and the Nash equilibrium solution concept.

Despite its relevance to real-world planning, to the authors' best knowledge, the setup that each player devises their action as the solution to a conjectured game independently and heterogeneously from one another has been sparsely studied.
\cite{kreps1990game} suggests that human players follow conjectured game models in their decision-making, and \cite{DEVETAG2008364} affirms this and finds that human players tend to act optimally within simplified mental game models.
The authors of \cite{meir2015playing} adopt a similar perspective as this work of different players having un-modeled biases or preferences in congestion games, and in \cite{zwillinger2023gametheoryanalysisplaying}, the authors study a similar setup but in two-player matrix games.
These prior findings motivated this work to rigorously quantify the consequences of conjectured games.
That notwithstanding, the concept of sensitivity, robustness, and uncertainty has been studied broadly in game theory and control, though with different motivations and different contexts.
Our work is related to the recent line of literature on sensitivity analysis of Nash equilibrium in network games \cite{parise2019variational, parise2017sensitivity}. This literature quantifies the change in equilibrium with respect to the change in the underlying parameters of the game. Meanwhile, we quantify the gap in experienced cost by players when each player has a different estimate of the underlying game parameters.
Some recent works propose an adaptive scheme where players infer the utility function of others dynamically~\cite{soltanian2025pace,ward2025active}.
Other frameworks seek to embed uncertainty within the game itself; in Bayesian games, introduced in \cite{harsanyi1967games}, players reason about uncertainty in the game parameters.
Another related line of literature is that on robust game theory \cite{aghassi2006robust} where each player computes an equilibrium strategy by assuming worst-case realization of uncertain parameters.
In any game model, the usefulness of the solution concept is subject to the accuracy of the primitives, e.g., priors in Bayesian games or {worst-case uncertainty realization} in robust games.
{Our work seeks to characterize the change in cost experienced by players due to misspecification in game models (or primitives) between players.}


{The main contributions of this work are summarized as follows. In \cref{subsec:conjectures}, we formally introduce the notion of the game-to-real gap (see Definition \ref{def: GameToRealGap}). In Section \ref{sec: EvaluatingGTRNetwork}, we analyze the game-to-real gap within the context of quadratic network games. Such games are characterized by utilities comprising three components: (i) a quadratic term representing the magnitude of each player’s local strategy; (ii) an interaction term given by the inner product between a player’s strategy and the aggregate strategy of others, weighted by an interaction matrix; and (iii) the inner product of the player’s strategy with a player-specific external shock. We show that the game-to-real gap can be made arbitrarily large—even under arbitrarily small misspecifications in the external shock parameters (Proposition \ref{prop:unbounded_shock}) or the interaction matrix (Proposition \ref{prop:unbounded_graph}).

Next, we propose two novel network centrality measures—Shock Misspecification Centrality (Definition \ref{def:overlap}) and Interaction-Graph Misspecification Centrality (Definition \ref{def: InteractionGraphCentrality})—which allow us to precisely characterize the game-to-real gap under shock misspecification (Theorem \ref{thm:network_agg_bound_shock}) and interaction-matrix misspecification (Proposition \ref{prop:network_agg_bound_net}), respectively. Finally, in \cref{subsec:num}, we present numerical experiments that illustrate these theoretical findings. Notably, we demonstrate that misalignment with a player of high Bonacich centrality (a common centrality measure in network games) does not necessarily correlate with significant degradation in predicted performance, thereby motivating the need for the proposed centrality measures.}
\section{Modeling Game-to-Real Gap}
Let $P_{i,j}$ denote the $(i,j)$ block of $P$ and $P(l,k)$ denote the $(l,k)$ element.
Let $-i := N\setminus\{i\}$.
When referencing an object conjectured by a player $j \in N$, we will use the notation $\persp{(\cdot)}{j}$.

\subsection{Games and Solution Concepts}
In this work, we adopt the perspective that a game is a model for multi-agent interaction, that, when paired with a solution concept, offers an analytical approach to predicting the behavior of a group of decision-makers.
Formally, let $G = (N,\mathcal{U},\{J_i\}_{i \in N})$ denote a game where $N = \{1,\ldots,n\}$ is a finite set of players, $u_i \in \mathcal{U}_i$ is the control action of player $i\in N$ and $\mathcal{U} = \mathcal{U}_1\times\ldots\times\mathcal{U}_n$ is the joint action space, and $J_i:\mathcal{U}\rightarrow \mathbb{R}$ is the cost function of player $i \in N$.
On its own, a game captures the primitives and preferences of interactions.
The emergent behavior we expect or may seek through the design of local decision-making algorithms is typically captured by some notion of equilibrium or solution concept of the form $\soln(G) \subseteq \mathcal{U}$ which provides a subset of the joint-action space deemed to be strategically plausible.
Many examples of solution concepts exist in game theory research, including Nash equilibria, correlated equilibria, strong equilibria, and more~\cite{ aumann1959acceptable}, each of which captures different notions of strategic plausibility.
Together, a game model and solution concept pair form a predictive model of collective behavior we can expect from a multi-agent interaction.
{In this work, we aim to gain a deeper understanding of the robustness of decisions made using these models under heterogeneous model misspecifications among the players.}


\begin{figure}
    \centering
    \includegraphics[width=0.99\linewidth]{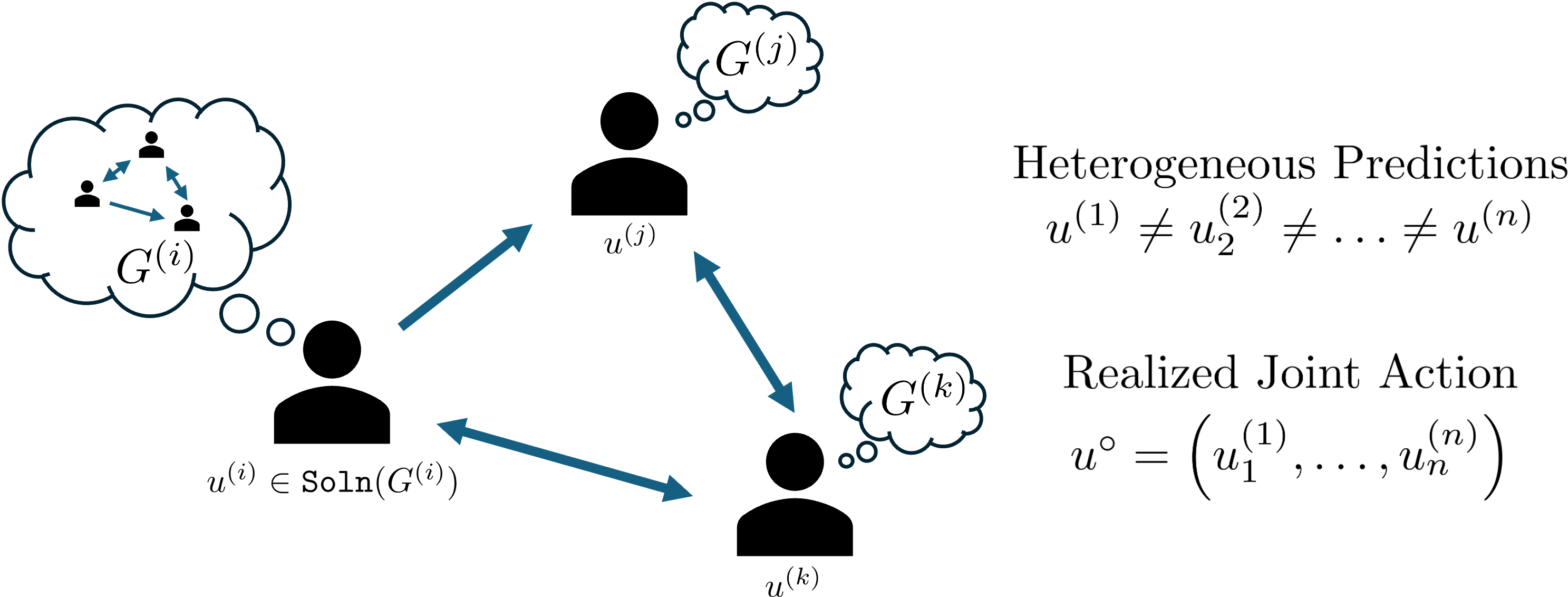}
    \caption{Illustration of misaligned game theoretic planning. Each player $i \in N$ conjectures a game $\persp{G}{i}$ and devises their action from a solution concept $\soln$, e.g., Nash equilibrium. If players' conjectures are misaligned, the realized joint action need not be an equilibrium and can offer degraded performance relative to each player's prediction.}
    \label{fig:conjecture_illustration}
\end{figure}

\subsection{Conjectured games and the \gapname}\label{subsec:conjectures}
When a decision maker wishes to plan their control action using a game-theoretic model, they must conjecture the game that they expect to face in the real world.
\begin{definition}
    The \emph{conjectured game} by player \(i\in N\) is denoted by the tuple $\persp{G}{i} = (N,\persp{\mathcal{U}}{i},\{\persp{J_j}{i}\}_{j \in N})$, where $\persp{\mathcal{U}}{i} = \prod_{j\in N}\persp{\mathcal{U}}{i}_j$ is the joint action set conjectured by player \(i\) and $\persp{J_j}{i}: \persp{\mathcal{U}}{i} \rightarrow \mathbb{R}$ denotes the cost of player \(j\in N\) conjectured by player \(i\).
    Each player knows their own cost and thus $\persp{J_i}{i} = J_i$.
\end{definition}
We formalize the agents' decision via a game-theoretic solution, i.e., we will say that player $j$ predicts the joint action to be $\persp{u}{j} \in \soln(\persp{G}{j})$ and takes the action {$\persp{u_j}{j}$}.

Together, a solution concept and game model pair $(\texttt{Soln},G)$ form a predictive model that an agent may use in their decision-making and planning.
Still, planning is not done by one single agent; instead, each agent must predict and make decisions independently, resulting in a set of predictive models $\{(\texttt{Soln}^{(j)},\persp{G}{j})\}_{j \in N}$ as illustrated in~\cref{fig:conjecture_illustration}.
When each player institutes their control action with respect to their own model, they deploy their personal component of their predicted joint action, i.e. $\persp{u_i}{i}$.
If each player has no misspecification, then each agent will receive an anticipated cost for their action.
However, suppose even one player has a small misspecification of the cost function or capabilities of another. In that case, it can misalign their prediction and negatively impact the realized cost of itself and other players, even though others have no such misspecification.
{Misspecifications can emerge from modeling errors, noise, perturbations, or assumptions}. We are thus motivated to study this loss from predicted to realized performance when misspecifications cause misalignments in models of strategic interactions.
We formalize this discrepancy via the following:
\begin{definition}\label{def: GameToRealGap}
    For a set of conjectured games $\{ \persp{G}{i}\}_{i \in N}$, the \gapname~ for player $i \in N$ is defined to be 
$$ {J_i(u^\circ) - J_i(\persp{u}{i}),}$$
where $u^\circ := (\persp{u_1}{1},\ldots,\persp{u_i}{i},\ldots,\persp{u_n}{n})$,
which is the difference between $i$'s realized cost and what they predicted via their game model.
\end{definition}
Our goal is to characterize the increase in cost caused by model misspecification.
We aim to gain initial insights from this perspective in the broadly studied class of network games.


\section{Evaluating Game-to-Real Gap in Network Games}\label{sec: EvaluatingGTRNetwork}
\subsection{Setup: Network Games}\label{ssec: NetworkGamesSetup}
Network games model settings where agents respond to some environmental stimulus while participating in multiple pairwise interactions with other ag,ents modeled by the edges of a directed network.
Positive and negative edge weights model whether one agent's actions are a strategic complement or substitute of another, i.e., if their objectives are positively or negatively correlated with one another.
A player's cost is the \mbox{sum of their internal cost and these interactions.}

We consider the multi-variate case where an agent $i \in N$ takes a vector action $u_i \in \mathbb{R}^{m_i}$.
The joint action is the concatenation of these decisions $u \in \mathbb{R}^m$ where $m = \sum_{i=1}^{n}m_i$.
Agent $i$ receives a reward for aligning their action with an external shock $\epsilon_i\in \mathbb{R}^{m_i}$ as well as through the pairwise interaction between agents.
From player $j$, player $i$ receives payoff $u_i^\top P_{i,j}u_j$, where $P_{i,j} \in \mathbb{R}^{m_i\times m_j}$.
One can consider this as a matrix weighted edge from agent $j$ to $i$ or that $P_{i,j}(l,k)$ is the weight of the edge from $u_j(k)$ to $u_i(l)$.

Agent $i$'s total cost is the quadratic cost of their local action, less the interactions with other agents and the environment, defined by
\begin{equation}
    J_i(u) = \frac{1}{2} u_i^\top u_i - u_i^\top\left({P_{i,-}} u + {\epsilon_i}\right)
\end{equation}
where ${P_{i,-}} = [P_{i,1},\ldots,P_{i,n}] \in \mathbb{R}^{m_i \times m}$ is the $i$th row of blocks of the \emph{interaction matrix} $${P} = \begin{bmatrix}
    P_{1,1} & \ldots & P_{1,n}\\\vdots & \ddots & \ddots\\P_{n,1} & \ldots &  P_{n,n}
\end{bmatrix} \in \mathbb{R}^{m\times m}.$$
A network game is thus captured by the tuple $G = (N, \{m_i\}_{i \in N},P,\epsilon)$.
As defined in this work, network games are instances of aggregate games with weighted, linear aggregators~\cite{belgioiosoConvexityMonotonicityGeneralized2017,parise2019variational}, and have been used to model a variety of multi-agent interactions, including markets and supply chains~\cite{candoganOptimalPricingPresence2010,acemogluNetworksShocksSystemic2015},
power grid storage and generation \cite{atzeniDemandSideManagementDistributed2013},
and transportation and congestion routing
\cite{paccagnanNashWardropEquilibria2019}.
In each of these settings are numerous sources of noise and parameters that may only be modeled via conjecture or estimation.
When agents make decisions using game models, they may possess inaccuracies in their primitives, leading to discrepancies in their predicted behavior.

In this work, we consider two forms of misspecifications: {environmental misspecifications}, captured by heterogeneous predictions of the parameter $\epsilon$ across the network, and agent-to-agent misspecifications, captured by the misalignment of the conjectured interaction matrix $P$.
The predominant solution concept in network games is the Nash equilibrium.
\begin{definition}
    A joint action $u \in \mathbb{R}^m$ is a Nash equilibrium in the network game $G$ if $J_i(u) \leq J_i(u_i^\prime,u_{-i})$ for all $u_i^\prime  \in \mathbb{R}^{m_i}$ and $i \in N$.
\end{definition}
When $(P+P^\top)/2 \prec I$, the game is strongly monotone and the Nash equilibrium is unique and can be expressed in closed form~\cite{acemogluNetworksShocksSystemic2015,rosenExistenceUniquenessEquilibrium1965}.

In line with \cref{subsec:conjectures}, we will denote the conjectured environmental network shock and conjectured interaction matrix from the perspective of player $j$ as $\persp{\epsilon}{j}$ and $\persp{P}{j}$, respectively.
For player $j$, who uses this solution concept alongside their conjectured game $\persp{G}{j}$, they predict the joint action
\begin{equation}\label{eq:net_agg_equilibrium}
    \persp{u}{j} = (I-\persp{P}{j})^{-1}\persp{\epsilon}{j},
\end{equation}
from which, they extract their action $\persp{u_j}{j}$.
\begin{assumption}
    We assume throughout that $(\persp{P}{j}+\persp{P}{j}\hspace{0pt}^\top)/2 \prec I$ is satisfied for all $j \in N$.
\end{assumption}


\subsection{Network games with arbitrarily large game-to-real gap}\label{subsec:cons}
To more rigorously understand the impact of misspecifications in strategic models, we first seek to identify under what conditions performance losses can occur and to what degree.
We seek to answer if, given some bounded level of mischaracterization among players, can we upper bound the \gapname?
The short answer is no.
In the following propositions, we demonstrate that the \gapname~ can be significant in the presence of any non-zero misalignment of the conjectured network games.
We focus on two types of mischaracterization, respectively: heterogeneous predicted shocks ($\persp{\epsilon}{i} \neq \persp{\epsilon}{j}$), and misaligned interaction matrix ($\persp{P}{i} \neq \persp{P}{j}$); in either case, we demonstrate that the increase in realized cost \mbox{from misspecifications can be large.}

In Proposition~\ref{prop:unbounded_shock}, we show that, even in the case where the difference between each pair of agents' predicted network shocks $\epsilon$ is upper bounded by a positive constant, the gap between predicted and realized performance can be arbitrarily large.
\begin{proposition}\label{prop:unbounded_shock}
    In quadratic network aggregative games, for any $\delta,M > 0$, there exists some $\{\persp{\epsilon}{i}\}_{i \in N}$ and interaction matrix $P$ such that
    $$\lVert \persp{\epsilon}{i} - \persp{\epsilon}{j}\rVert_2 \leq \delta \quad \forall \ i, j \in N,
    $$
    but that for each $i \in N$,
    $$J_i(u^\circ) - J_i(\persp{u}{i}) > M.$$
\end{proposition}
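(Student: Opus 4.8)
The plan is to exploit the fact that a player's predicted action is an amplified image of its conjectured shock: since only the shocks are misspecified here, $\persp{P}{i}=P$ for every $i$ and each player solves $\persp{u}{i}=(I-P)^{-1}\persp{\epsilon}{i}$, and $(I-P)^{-1}$ can be made to have arbitrarily large norm by pushing $P$ toward the boundary of the strong-monotonicity region. A fixed, $\delta$-bounded disagreement among the $\persp{\epsilon}{i}$ then enters the gap \emph{twice} and is amplified twice over, so the gap grows quadratically in $\lVert (I-P)^{-1}\rVert$ while $\delta$ stays fixed.

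First I would derive a closed form for the gap. Because player $i$ knows its own cost, $\persp{J_i}{i}=J_i$, so the true shock to $i$ is $\persp{\epsilon_i}{i}$ and $u^\circ$ agrees with $\persp{u}{i}$ in its $i$-block. In $J_i$ the quadratic self-term $\tfrac12 u_i^\top u_i$ and the shock term $u_i^\top \epsilon_i$ therefore cancel between $J_i(u^\circ)$ and $J_i(\persp{u}{i})$, leaving the bilinear expression
$$J_i(u^\circ)-J_i(\persp{u}{i}) \;=\; (\persp{u_i}{i})^\top P_{i,-}\big(\persp{u}{i}-u^\circ\big) \;=\; \sum_{j\neq i}(\persp{u_i}{i})^\top P_{i,j}\big(\persp{u_j}{i}-\persp{u_j}{j}\big),$$
where $\persp{u_j}{i}$ is the $j$-block of $\persp{u}{i}$ and the $j=i$ term drops since $u^\circ_i=\persp{u_i}{i}$. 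The first factor scales like $(I-P)^{-1}\persp{\epsilon}{i}$ and each difference $\persp{u_j}{i}-\persp{u_j}{j}$ scales like the $j$-block of $(I-P)^{-1}(\persp{\epsilon}{i}-\persp{\epsilon}{j})$, which produces the double amplification.

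Next I would specialize to scalar actions ($m_i=1$, $m=n$) and the symmetric interaction matrix $P=\tfrac{p}{n-1}(\mathbf{1}\mathbf{1}^\top-I)$ with $p\in(0,1)$, whose eigenvalues are $p$ and $-\tfrac{p}{n-1}$, both below $1$, so every conjectured game is strongly monotone; on $\mathrm{span}(\mathbf 1)$ the map $(I-P)^{-1}$ is multiplication by $\tfrac{1}{1-p}$. I would take $\persp{\epsilon}{i}=c_i\mathbf 1$ for pairwise distinct nonzero scalars $c_1,\dots,c_n$ with $\sum_i c_i=0$, rescaled so that $\lVert\persp{\epsilon}{i}-\persp{\epsilon}{j}\rVert_2=\sqrt{n}\,|c_i-c_j|\le\delta$. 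Then $\persp{u}{i}=\tfrac{c_i}{1-p}\mathbf 1$, so $\persp{u_i}{i}=\tfrac{c_i}{1-p}$, $u^\circ_j=\tfrac{c_j}{1-p}$, and, using $\sum_{j\neq i}(c_i-c_j)=n c_i$ from the centering,
$$J_i(u^\circ)-J_i(\persp{u}{i}) \;=\; \frac{p}{n-1}\cdot\frac{c_i}{1-p}\cdot\frac{1}{1-p}\sum_{j\neq i}(c_i-c_j) \;=\; \frac{n\,p\,c_i^2}{(n-1)(1-p)^2}.$$
Fixing $\delta$ fixes the $c_i\neq 0$, and letting $p\uparrow 1$ drives the right-hand side to $+\infty$; choosing $p$ close enough to $1$ yields $J_i(u^\circ)-J_i(\persp{u}{i})>M$ for all finitely many $i\in N$ at once.

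The step requiring care is not the amplification but the \emph{sign}: the bilinear form above is indefinite in general, and an arbitrary $\delta$-small perturbation of the shocks can produce cancellations or a negative gap for some player. Aligning every $\persp{\epsilon}{i}$ with the Perron eigenvector $\mathbf 1$ of $P$ (so the whole problem collapses to the scalar $\tfrac{1}{1-p}$ blow-up) and centering the $c_i$ turns the expression into $\tfrac{n p c_i^2}{(n-1)(1-p)^2}$, which is strictly positive for \emph{every} player simultaneously; the remaining estimates are routine linear algebra.
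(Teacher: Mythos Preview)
Your argument is correct. Both you and the paper exploit the same mechanism---pushing $P$ to the boundary of strong monotonicity so that $\lVert(I-P)^{-1}\rVert$ blows up while the shock disagreement stays $\delta$-bounded---but via different constructions. The paper uses a three-player directed cycle with edge weight $\gamma\in(0,1)$ and lets each player's conjectured shock disagree with the others in a single coordinate; the cyclic symmetry makes the gap identical across players, and an explicit computation gives $\frac{\gamma(\gamma^2+\gamma+\beta)(1-\beta)(1-\gamma^2)}{(1-\gamma^3)^2}\to\infty$ as $\gamma\uparrow 1$. You instead take the symmetric complete graph $P=\tfrac{p}{n-1}(\mathbf 1\mathbf 1^\top-I)$ and align every conjectured shock with the Perron eigenvector $\mathbf 1$, which reduces $(I-P)^{-1}$ to scalar multiplication by $\tfrac{1}{1-p}$ and, together with the centering $\sum_i c_i=0$, collapses the gap to the manifestly positive $\tfrac{np\,c_i^2}{(n-1)(1-p)^2}$. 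Your route makes the double $(1-p)^{-2}$ amplification explicit and handles the sign issue---which you rightly flag as the delicate step---automatically via the square $c_i^2$; the paper's route is more hands-on and shows that already a sparse asymmetric network (a $3$-cycle) suffices. Either way the culprit is $\sigma_{\min}(I-P)\to 0$, exactly the denominator that appears in the subsequent upper bound of Theorem~\ref{thm:network_agg_bound_shock}. (Minor note: the ``pairwise distinct'' requirement on the $c_i$ is not actually used---only $c_i\neq 0$ and $\sum_i c_i=0$ matter.)
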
 
This result is notable because network games are a well-studied class of games with desirable analysis properties, such as uniqueness of equilibria~\cite{rosenExistenceUniquenessEquilibrium1965} and continuity of the solution map over payoff-relevant parameters~\cite{parise2017sensitivity}.
Nevertheless, the accuracy of an agent's predicted cost can be extremely sensitive to misalignment between its predicted network shocks $\epsilon$ and those of other players (a variable in which the users' cost functions are linear).
The misalignment of these terms in the real world is also entirely plausible, in economic settings, if firms possess heterogeneous forecasting models of upcoming price volatility, they may adopt heterogeneous predictions.
Similarly, a collection of autonomous agents may possess local sensors collecting heterogeneous data used to form estimates about environmental states (e.g., predicted renewable energy generation).

A proof of Proposition \ref{prop:unbounded_shock} is in the Appendix. The proof follows by construction; however, it reveals that networks where $\sigma_{\rm min}(I-P)$ is small, e.g., a cycle with near unitary edge weights, are more susceptible to large \gapname.

Another aspect in modeling strategic interactions is not just predicting environmental payoff-relevant parameters, but also conjecturing how agents interact with one another and how they perceive one another's actions.
To capture how conjectured interactions may manifest in the context of network games, we consider that each player $i \in N$ conjectures an interaction network $\persp{P}{i}$ which they believe models the connections between all the players.
As before with conjectured network shocks, two players may have misaligned conjectures on the network of interactions; accordingly, each player's predicted joint-action may differ, resulting in a \gapname.
In Proposition~\ref{prop:unbounded_graph}, we show that with arbitrarily small differences in players' conjectured networks, there exists a possibility for arbitrarily large differences between predicted and realized cost.
\begin{proposition}\label{prop:unbounded_graph}
    In quadratic network aggregative games, for any $\delta, M > 0$,  there exists some $\epsilon$ and conjectured interaction matrices $\{\persp{P}{i}\}_{i \in N}$
    such that for all $i,j \in N$,
    $$\lVert \persp{P}{i} - \persp{P}{j}\rVert_F \leq \delta,$$
    but that for each player $i \in N$
    $$J_i(u^\circ) - J_i(\persp{u}{i}) > M.$$
\end{proposition}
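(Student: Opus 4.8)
\emph{Proof plan.} The plan is to first derive a closed form for the game-to-real gap of each player, and then build conjectured games supported on a single cycle that drive this gap above $M$ for \emph{every} player at once; the crux will be the ``every player'' part, since generic network perturbations tend to help some players while hurting others.

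First, fix $i$ and record a general identity for network games. Recall from \eqref{eq:net_agg_equilibrium} that $\persp{u}{i}=(I-\persp{P}{i})^{-1}\persp{\epsilon}{i}$, so the $i$th block of $(I-\persp{P}{i})\persp{u}{i}=\persp{\epsilon}{i}$ reads $\persp{u_i}{i}=(\persp{P}{i})_{i,-}\persp{u}{i}+(\persp{\epsilon}{i})_i$; because player $i$ knows $J_i$ we have $(\persp{P}{i})_{i,-}=P_{i,-}$ and $(\persp{\epsilon}{i})_i=\epsilon_i$, hence $\persp{u_i}{i}=P_{i,-}\persp{u}{i}+\epsilon_i$. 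Substituting into $J_i$ gives $J_i(\persp{u}{i})=-\tfrac12\lVert\persp{u_i}{i}\rVert_2^2$, and using $u^\circ_i=\persp{u_i}{i}$ one obtains
$$J_i(u^\circ)-J_i(\persp{u}{i})=(\persp{u_i}{i})^\top P_{i,-}\bigl(\persp{u}{i}-u^\circ\bigr).$$
The $i$th block of $\persp{u}{i}-u^\circ$ vanishes, so the gap only measures how far player $i$'s conjecture of its neighbors' actions lies from what they actually play; it therefore suffices to make every player overestimate the beneficial aggregate of its neighbors by a controlled amount.

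Next I would take the following construction. Assume $n\ge 2$ and let $m_i=1$, with all player indices taken mod $n$. Fix $\zeta\in(0,\tfrac14]$ with $\zeta\le\delta/\sqrt2$, set $a:=1-2\zeta\in[\tfrac12,1)$ and $c:=a+\zeta=1-\zeta$. Let the true $P$ be the directed $n$-cycle of uniform weight $a$ (so $P(i,i{+}1)=a$, all other entries zero), and let player $i$'s conjecture $\persp{P}{i}$ agree with this cycle \emph{except} that every edge $(j,j{+}1)$ with $j\ne i$ is assigned weight $c$ instead of $a$; player $i$'s own edge $(i,i{+}1)$ keeps weight $a$, which is exactly the requirement $\persp{J_i}{i}=J_i$. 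Take the common shock $\epsilon=e\,\mathbf 1$ for a scalar $e>0$ chosen last ($\mathbf 1$ the all-ones vector). For $i\ne j$ the matrix $\persp{P}{i}-\persp{P}{j}$ has exactly two nonzero entries, $\pm\zeta$, so $\lVert\persp{P}{i}-\persp{P}{j}\rVert_F=\zeta\sqrt2\le\delta$; and the largest eigenvalue of the symmetric part of each $\persp{P}{i}$ (and of $P$) is at most $c+\tfrac\zeta2=1-\tfrac\zeta2<1$, so every conjectured game is strongly monotone and \eqref{eq:net_agg_equilibrium} applies.

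Then I would solve $(I-\persp{P}{i})\persp{u}{i}=\epsilon$: this is the cyclic recursion $\persp{u_j}{i}=c\,\persp{u_{j+1}}{i}+e$ for $j\ne i$, closed by $\persp{u_i}{i}=a\,\persp{u_{i+1}}{i}+e$. One finds $\persp{u_i}{i}>0$ (indeed $(I-\persp{P}{i})^{-1}\ge0$ entrywise, since $\persp{P}{i}\ge0$ has spectral radius $<1$, and $\epsilon>0$), and unrolling the recursion around the loop gives
$$\persp{u_{i+1}}{i}-\persp{u_i}{i}=\frac{e\,\zeta\,(1-c^{n-1})}{(1-c)(1-ac^{n-1})}>0.$$
Since $P_{i,-}$ has its only nonzero entry, $a$, in column $i{+}1$, and by the cyclic symmetry of the construction $u^\circ_{i+1}=\persp{u_{i+1}}{i+1}=\persp{u_i}{i}$, the identity above yields, for every $i$,
$$J_i(u^\circ)-J_i(\persp{u}{i})=a\,\persp{u_i}{i}\bigl(\persp{u_{i+1}}{i}-\persp{u_i}{i}\bigr)=K_n(\zeta)\,e^2,$$
where $K_n(\zeta)>0$ depends only on $n$ and $\zeta$ (each factor of $a\,\persp{u_i}{i}(\persp{u_{i+1}}{i}-\persp{u_i}{i})$ is positive because $0<a<1$, $0<c<1$, $\zeta>0$, $n\ge2$, and both $\persp{u_i}{i}$ and $\persp{u_{i+1}}{i}-\persp{u_i}{i}$ scale linearly in $e$). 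Choosing $e>\sqrt{M/K_n(\zeta)}$ makes the gap exceed $M$ for every player, proving the claim.

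The hard part is the sign, not the magnitude: it is easy to produce a huge gap for \emph{some} player, and easy for generic perturbations to \emph{help} others, whereas here we need a large \emph{positive} gap for all of them. The fully symmetric cyclic construction achieves this because each player understates precisely its own outgoing edge while leaving unchanged the weights it attributes to the rest of the loop, so it predicts a strictly larger beneficial aggregate than materializes; symmetry then transfers the same positive discrepancy to every player. (Equivalently one may keep $e$ fixed and let $\zeta\to0$, since $K_n(\zeta)\sim\frac{n(n-1)}{(n+1)^2\zeta}\to\infty$; this is the same near-singularity of $I-\persp{P}{i}$ as $c\to1$ that was flagged after Proposition~\ref{prop:unbounded_shock}.)
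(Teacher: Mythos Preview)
Your construction is correct and follows the same blueprint as the paper: a directed cycle in which each player's own outgoing edge carries a strictly smaller weight than the weight that player attributes to everyone else's edges, so that each player over-predicts the beneficial aggregate it will receive from its neighbor. The paper instantiates this with $n=3$, takes the ``high'' weight equal to $1$, uses the asymmetric shock $\epsilon=[\gamma,-\gamma,-\gamma]^\top$, and explicitly evaluates the gap only for player~$2$. Your version is tidier on two counts that are worth recording: (i) choosing $c=1-\zeta<1$ keeps every conjectured game strictly inside the strong-monotonicity region rather than on its boundary; and (ii) taking the rotationally invariant shock $\epsilon=e\mathbf 1$ makes the whole construction cyclically symmetric, so the single computation of $u_{i+1}^{(i)}-u_i^{(i)}>0$ immediately delivers a strictly positive gap for \emph{every} player, which is exactly what the proposition demands and which the paper's asymmetric shock does not make transparent.
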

A proof of Proposition \ref{prop:unbounded_graph} appears in the Appendix and also follows by construction.

Though we have identified that the \gapname~can be large under bounded misalignments, this does not mean that every problem instance admits similar susceptibility.
In \cref{subsec:characterization}, we seek to refine this understanding of which multi-agent systems admit sensitive performance predictions based on problem parameters and network properties.

\subsection{Exact characterization of the \gapname}\label{subsec:characterization}
In \cref{subsec:cons}, it was revealed that the consequences of misspecifications can be significant, even when said misspecifications are small.
In this section, we seek to refine this by identifying properties of network games which cause them to admit a large \gapname.
To do so, we introduce a new notion of node communicability, which extends existing notions of Katz-Bonacich centrality to measure how significant the overlap in network effect is between the two agents' action vectors.
To do so, we define the Leontief matrix $L := (I-P)^{-1}$, which in networks is used to characterize a node's centrality~\cite{Katz_1953, bonacichPowerCentralityFamily1987}.
Denote the Bonacich centrality of a single control action $u_i(k)$ by $\beta_i(k) = \sum_{l = 1}^nL_{i,-}(k,l)$ or the row sum of the $u_i(k)$ row of the Leontief matrix.

In network games, the Bonacich centrality of a node (syn. player) corresponds to the weighted average of paths originating at node $i$ to each other node in the network, which quantifies how a shock originating at $i$ may propagate throughout the network.
It has been shown that this and other notions of centrality correlate with the sensitivity of the Nash equilibrium to the shock~\cite{acemogluNetworksShocksSystemic2015,parise2017sensitivity}.
However, in seeking to understand how the misalignment of multiple agents' conjectures affects predicted performance, we must consider the inter-agent effects through the network.

\subsubsection{Heterogeneous Shock Forecasts}
We first seek to more precisely characterize the \gapname~in the case where players have different predictions of the network shock $\epsilon$.
To do so, we introduce a new operator on the adjacency matrix $P$ offering a pairwise perspective on centrality.
\begin{definition}\label{def:overlap}
The \emph{Shock Misspecification Centrality} between player $i$ and $j$ is denoted by
\begin{align*}
    \mathcal{B}_{i,j} &= L_{i,-}^\top P_{i,j} L_{j,-}\\
    &= [(I-P)^{-1}]_{i,-}^\top P_{i,j} [(I-P)^{-1}]_{j,-} \in \mathbb{R}^{m\times m}.
\end{align*}
\end{definition}
Intuitively, $\mathcal{B}_{i,j}$ captures the edge-conditioned overlap of two players' centrality profile $L_{i,-}$ and $L_{j,-}$.
This overlap profile allows us to naturally generalize to agents with vector actions, even of different sizes.
When $P$ is symmetric, then $\mathcal{B}_{i,j}$ is additionally the directional derivative of the Leontif matrix $L$ in the direction $P_{i,j}$.
When each player's action is a scalar, then $\mathcal{B}_{i,j}$ is proportional to the outer product of their centrality profiles $\beta_i$ and $\beta_j$.

In \cref{thm:network_agg_bound_shock}, we characterize the \gapname~in closed form via the Shock Misspecification Centrality and predicted shocks between one player and each other; further, we upper bound the gap when the distance between any two predictions is bounded by a constant factor and observe that this bound is conditioned on the inverse of $\sigma_{\rm min}(I-P)$, matching the intuition from \cref{subsec:cons} that the \gapname~can be large when this quantity is small.
We further note that the gain in cost need not occur from two players misconjecturing about one another; it can also be contributed to by two players having misaligned predictions of a third.
\begin{theorem}\label{thm:network_agg_bound_shock}
In quadratic network aggregative games with network $P$ such that $(P+P^\top)/2 \prec I$, if each player has a heterogeneous conjecture of the shock $\{\persp{\epsilon}{j}\}_{j \in N}$ 
Then the \gapname~for player $i$ is
    \begin{align}\label{eq:thm_bound}
        J_i(u^\circ) - J_i(\persp{u}{i}) 
        &= \sum_{j \neq i} \persp{\epsilon}{i}\hspace{0pt}^\top \mathcal{B}_{i,j}(\persp{\epsilon}{i}-\persp{\epsilon}{j}).
    \end{align}
Further, if for all $i,j \in N$, 
$\lVert \persp{\epsilon}{i} - \persp{\epsilon}{j}\rVert_2 \leq \delta,$
then
\begin{align*}
    J_i(u^\circ) - J_i(\persp{u}{i}) \hspace{-1pt} \leq \hspace{-1pt}\Big(\delta\lVert \persp{\epsilon}{i}\rVert_2 \sum_{j \neq i}\lVert P_{i,j}\rVert_2\Big)\hspace{-2pt}/\hspace{-2pt}\left(\sigma_{\rm min}(I-P)\right)^2.
\end{align*}
\end{theorem}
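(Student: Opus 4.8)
The plan is to reduce the gap to a single bilinear form in the shock conjectures by exploiting that the misspecification here affects only $\epsilon$. Under this reading every player uses the true interaction matrix $P$, so (with $(P+P^\top)/2 \prec I$ making $I-P$ invertible) each $\persp{u}{j} = (I-P)^{-1}\persp{\epsilon}{j} = L\,\persp{\epsilon}{j}$ by \eqref{eq:net_agg_equilibrium}, where $L := (I-P)^{-1}$. In particular the realized profile $u^\circ$ has $i$-block equal to $\persp{u_i}{i} = [L\persp{\epsilon}{i}]_i = L_{i,-}\persp{\epsilon}{i}$, i.e. $u^\circ$ and $\persp{u}{i}$ agree on block $i$, and for $j \neq i$ their $j$-blocks differ by $[L\persp{\epsilon}{j}]_j - [L\persp{\epsilon}{i}]_j = L_{j,-}(\persp{\epsilon}{j}-\persp{\epsilon}{i})$.

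First I would expand $J_i(u^\circ) - J_i(\persp{u}{i})$ using $J_i(u) = \tfrac12 u_i^\top u_i - u_i^\top(P_{i,-}u + \epsilon_i)$. Because the two profiles coincide on block $i$, the self-quadratic term $\tfrac12 u_i^\top u_i$ and the linear term $u_i^\top \epsilon_i$ cancel identically, and the interaction term collapses to $-\,\persp{u_i}{i}\hspace{0pt}^\top P_{i,-}\,(u^\circ - \persp{u}{i})$. Substituting the block structure of $u^\circ - \persp{u}{i}$ from above, this equals $\sum_{j\neq i}\persp{\epsilon}{i}\hspace{0pt}^\top L_{i,-}^\top P_{i,j} L_{j,-}(\persp{\epsilon}{i}-\persp{\epsilon}{j})$ (the sign flip from $\persp{\epsilon}{j}-\persp{\epsilon}{i}$ coming from the leading minus), and recognizing $L_{i,-}^\top P_{i,j} L_{j,-} = \mathcal{B}_{i,j}$ by Definition \ref{def:overlap} gives \eqref{eq:thm_bound}.

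For the norm bound I would apply the triangle inequality over the sum and then submultiplicativity: $|\persp{\epsilon}{i}\hspace{0pt}^\top \mathcal{B}_{i,j}(\persp{\epsilon}{i}-\persp{\epsilon}{j})| \le \lVert \persp{\epsilon}{i}\rVert_2\,\lVert L_{i,-}\rVert_2\,\lVert P_{i,j}\rVert_2\,\lVert L_{j,-}\rVert_2\,\lVert \persp{\epsilon}{i}-\persp{\epsilon}{j}\rVert_2$. Since every block-row satisfies $\lVert L_{k,-}\rVert_2 \le \lVert L\rVert_2 = 1/\sigma_{\rm min}(I-P)$ (a subvector of $Lv$ has $2$-norm at most $\lVert Lv\rVert_2$), and $\lVert \persp{\epsilon}{i}-\persp{\epsilon}{j}\rVert_2 \le \delta$ by hypothesis, each term is at most $\delta\,\lVert \persp{\epsilon}{i}\rVert_2\,\lVert P_{i,j}\rVert_2/(\sigma_{\rm min}(I-P))^2$; summing over $j \neq i$ yields the stated inequality.

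\textbf{Main obstacle.} There is no deep difficulty here; the work is in the bookkeeping of the block-partitioned vectors and matrices and in verifying that the self-term genuinely cancels — which hinges only on $u^\circ_i = \persp{u_i}{i}$ together with $\persp{J_i}{i} = J_i$, and on the closed form \eqref{eq:net_agg_equilibrium} for $\persp{u}{i}$. One subtlety to flag (not needed for the cancellation, which never differentiates $J_i$) is that the validity of \eqref{eq:net_agg_equilibrium} as the Nash equilibrium uses the standard reading of the quadratic interaction term; and one should be careful to justify $\lVert L_{i,-}\rVert_2 \le \lVert L\rVert_2$ rather than treating it as obvious.
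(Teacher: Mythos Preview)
Your argument is correct. The direct expansion works cleanly because, with a common $P$, the map $\epsilon \mapsto (I-P)^{-1}\epsilon$ is linear and $J_i$ is affine in $u_{-i}$ once $u_i$ is fixed; the cancellation of the self-quadratic and linear-in-$\epsilon_i$ terms then reduces the gap to exactly the bilinear sum you wrote, and your norm bound via $\lVert L_{k,-}\rVert_2 \le \lVert L\rVert_2 = 1/\sigma_{\rm min}(I-P)$ is the right estimate.

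The paper does not do this direct subtraction. Instead it parameterizes the realized profile by the stacked vector $\overline{\epsilon} = [\persp{\epsilon}{1}\hspace{0pt}^\top,\ldots,\persp{\epsilon}{n}\hspace{0pt}^\top]^\top$ and evaluates $J_i(u^\circ)-J_i(\persp{u}{i})$ as a line integral of $\nabla_{\overline{\epsilon}} J_i(u^\circ(\overline{\epsilon}))$ along the segment from the homogeneous profile $\overline{\persp{\epsilon}{i}}$ to $\overline{\epsilon^\circ}$, applying the chain rule through $\nabla_\epsilon \persp{u_j}{j} = L_{j,-}$ and $\nabla_{u_j}J_i = -P_{i,j}^\top u_i$, and then noting the integrand is constant in $s$. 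For this theorem that machinery is heavier than necessary---your computation exploits the affine structure and arrives at \eqref{eq:thm_bound} in one step---but the paper's line-integral template is what it reuses essentially verbatim for the interaction-matrix misspecification result (Proposition~\ref{prop:network_agg_bound_net}) and for Corollary~\ref{cor:both}, where the equilibrium depends nonlinearly on the conjectured parameter and a direct difference no longer factors so simply. So your route is more elementary here; theirs buys uniformity across the three results.
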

\begin{proof}
    Consider the case where each player $i \in N$ conjectures a game $\persp{G}{i}$ which has adjacency matrix $P$ and predicted disturbances $\persp{\epsilon}{i} \in \mathbb{R}^m$.
    To capture the profile of all players' predictions, let $\overline{\epsilon} \in \mathbb{R}^{nm}$ be the vector concatenating each player's respective predicted disturbances.

    We define the realized joint action $u^\circ(\overline{\epsilon})$ as a function over the set of prediction profiles to joint actions.
    Note that when $\overline{\epsilon} = [\persp{\epsilon}{i}\hspace{0pt}^\top,\ldots,\persp{\epsilon}{i}\hspace{0pt}^\top]^\top$, then 
    $u^\circ(\overline{\epsilon}) = \persp{u}{i}$ due to the uniquenss of equilibria and homogeneity of the conjectured games; we define this specific prediction profile as $\overline{\persp{\epsilon}{i}}$ and the realized joint actions from each players' predictions as $\overline{\epsilon^\circ} := [\persp{\epsilon}{1}\hspace{0pt}^\top,\ldots,\persp{\epsilon}{n}\hspace{0pt}^\top]\hspace{0pt}^\top$.
    To characterize the gap in player $i$'s predicted and realized cost, we will take the line integral of $J_i(u^\circ(\overline{\epsilon}))$ between $\overline{\persp{\epsilon}{i}}$ and $\overline{\epsilon^\circ}$.
    We can express this line integral by
    \begin{subequations}
    \begin{align}
        &J_i(u^\circ) - J_i(\persp{u}{i})\label{eq:line_int_a} \\
        &= \int_{s=0}^1 \left(\overline{\epsilon^\circ} - \overline{\persp{\epsilon}{i}}\right)^\top \nabla_{\overline{\epsilon}} J_i\left(u^\circ\left((1\textup{-}s)\overline{\persp{\epsilon}{i}} + s\overline{\epsilon^\circ}\right)\right) ds \label{eq:line_int_b}\\
        &= \int_{s=0}^1 \left(\overline{\epsilon^\circ} - \overline{\persp{\epsilon}{i}}\right)^\top \nonumber \\
        &\Bigg(\hspace{-4pt}\begin{bmatrix}
            \nabla_\epsilon \persp{u_1}{1}\left((1\textup{-}s)\persp{\epsilon}{i} \textup{+} s\persp{\epsilon}{1}\right)^\top \nabla_{u_1} J_i\left(u^\circ\left((1\textup{-}s)\overline{\persp{\epsilon}{i}} \textup{+} s\overline{\epsilon^\circ}\right)\hspace{-1pt}\right)\\
            \vdots\\
            \nabla_\epsilon \persp{u_n}{n}\left((1\textup{-}s)\persp{\epsilon}{i} \textup{+} s\persp{\epsilon}{n}\right)^\top \nabla_{u_n} J_i\left(u^\circ\left((1\textup{-}s)\overline{\persp{\epsilon}{i}} \textup{+} s\overline{\epsilon^\circ}\right)\hspace{-1pt}\right)
        \end{bmatrix} \nonumber\\
        &+ \nabla_{\overline{\epsilon}}J_i\left(u^\circ;\left((1\textup{-}s)\overline{\persp{\epsilon}{i}} + s\overline{\epsilon^\circ}\right)\right) \Bigg) ds \label{eq:line_int_c}\\
        &= \int_{s=0}^1 \left( \overline{\persp{\epsilon}{i}} - \overline{\epsilon^\circ}\right)^\top \begin{bmatrix}
            [(I-P)^{-1}]_{1,-}^\top P_{i,1}^\top\\
            \vdots\\
            [(I-P)^{-1}]_{n,-}^\top P_{i,n}^\top
        \end{bmatrix} \persp{u_i}{i} ds \label{eq:line_int_d}\\
        &= \sum_{j \neq i} (\persp{\epsilon}{i} - \persp{\epsilon}{j})^\top [(I-P)^{-1}]_{j,-}^\top P_{i,j}^\top \persp{u}{i}_i \label{eq:line_int_e}.
    \end{align}
    \end{subequations}
    Where \eqref{eq:line_int_d} holds by noting that $J_i$ is parameterized by $\overline{\epsilon}$, but $\nabla_{\persp{\epsilon}{j}_k} J_i(u;\overline{\epsilon}) = \mathbf{0}_{m_k}$ if $j \neq i$ and $\left(\overline{\persp{\epsilon}{i}} - \overline{\epsilon^\circ}\right)_k = 0$ for all $k$ in the $i$th block.
    Accordingly, the latter term in \eqref{eq:line_int_c} evaluates to zero.
    Additionally, the chain rule can be evaluated explicitly by recalling that $\nabla_\epsilon \persp{u}{j}_j = [(I-P)^{-1}]_{j,-}$ for all $\epsilon$ and $\nabla_{u_j} J_i(u) = -P_{i,j}^\top u_i$ for all $j \neq i$.
    \eqref{eq:line_int_e} holds from evaluating the integral over an integrand that no longer depends on the integrating variable and substituting the definition of $\persp{u_i}{i}$.
    Transposing provides \eqref{eq:thm_bound}

    To complete the proof, simply apply Cauchy-Schwarz inequality and the property of the induced 2-norm of the matrix operator $I-P$ being greater than the induced 2-norm of any block and upper bound $\lVert \persp{\epsilon}{j} - \persp{\epsilon}{i}\rVert_2$ by $\delta$.
\end{proof}

\subsubsection{Heterogeneous Network Models}
Proposition~\ref{prop:unbounded_graph} demonstrated that players' misaligned conjectures about how they interact with one another, i.e., when $\persp{P}{i} \neq \persp{P}{j}$, can also introduce performance losses.
We introduce a new overlap profile that considers misaligned networks.
\begin{definition}\label{def: InteractionGraphCentrality}
    The Interaction-Graph Misspecification Centrality between player $i$ and player $j$ is denoted by
\end{definition}
\begin{align*}
    \mathcal{C}_{i,j} = \persp{L}{i}_{i,-}\hspace{0pt}^\top(\persp{P}{j}-\persp{P}{i})_{i,j}(\persp{L}{i}\persp{P}{i}\persp{L}{j})_{j,-}\\
\end{align*}
This new operator on the pair $\persp{P}{i}$ and $\persp{P}{j}$ resembles the introduced Shock Misspecification Centrality but differs in two key ways: the second term is now the difference between the two considered players' conjectured adjacency matrices, and the third term is now the $\persp{P}{i}$ weighted product of each player's conjectured Leonteif matrix.
Each of these changes captures that difference between the two graphs.
When we consider the case where $\persp{P}{i} = \persp{\alpha}{i}P$, then at the point when $\persp{\alpha}{i} = \persp{\alpha}{j}$, $\persp{L}{i}\persp{P}{i}\persp{L}{i}$ is $\partial \persp{L}{i} / \partial \alpha$.

In Proposition~\ref{prop:network_agg_bound_net}, we characterize the \gapname~caused by heterogeneous conjectured networks using the introduced Interaction Graph Misspecification Centrality.
Similar intuition is revealed that the loss in predicted performance does not happen solely from alignment with central nodes, but from alignment with similar nodes in the network.
We exemplify this in \cref{subsec:num}.
Further bounding the gap by its's primitives reveals a similar dependence on the recipricol of $\sigma_{\rm min}(I-P)$, though now considering each players conjectured network.
\begin{proposition}\label{prop:network_agg_bound_net}
In quadratic network aggregative games where each player $i \in N$ conjectures a network $\persp{P}{i} = \persp{\alpha}{i}P$ with $\persp{\alpha}{i} > 0$ and $\persp{\alpha}{i}(P+P^\top)/2 \prec I$, if the shock to the network is $\epsilon \in \mathbb{R}^m$, then the \gapname~for player $i$ is 
\begin{equation}\label{eq:prop_graph_char}
J_i(u^\circ) - J_i(\persp{u}{i}) = \sum_{j \neq i} \epsilon^\top \mathcal{C}_{i,j}\epsilon.
\end{equation}
Further, if for each $i,j \in N$, $|\persp{\alpha}{i}-\persp{\alpha}{j}|\leq \delta$, then
\begin{equation*}
    J_i(u^\circ) - J_i(\persp{u}{i}) \leq \sum_{j \neq i}\frac{\delta |\persp{\alpha}{i}| \| P\|_2\|P_{i,j}\|_2
    }{\sigma_{\rm min}(I-\persp{\alpha}{j}P)\sigma_{\rm min}(I-\persp{\alpha}{i}P)^2}
\end{equation*}
\end{proposition}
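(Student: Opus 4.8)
The plan is to establish the exact identity~\eqref{eq:prop_graph_char} by a direct computation and then to read off the displayed bound by routine operator-norm estimates. Write $\persp{L}{i} := (I-\persp{\alpha}{i}P)^{-1}$, so that $\persp{u}{i} = \persp{L}{i}\epsilon$, $\persp{u}{i}_i = \persp{L}{i}_{i,-}\epsilon$, and $u^\circ_j = \persp{u}{j}_j = \persp{L}{j}_{j,-}\epsilon$ for each $j$; in particular $u^\circ_i = \persp{u}{i}_i$. Player $i$'s predicted and realized payoffs are both evaluated with player $i$'s own cost function, which under the conjectured network $\persp{P}{i} = \persp{\alpha}{i}P$ reads $J_i(u) = \tfrac{1}{2}\|u_i\|_2^2 - u_i^\top(\persp{\alpha}{i}P_{i,-}u + \epsilon_i)$. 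Since $\persp{u}{i}$ is the Nash equilibrium of $\persp{G}{i}$, the stacked first-order conditions give $(I-\persp{\alpha}{i}P)\persp{u}{i} = \epsilon$, hence $\persp{u}{i}_i = \persp{\alpha}{i}P_{i,-}\persp{u}{i} + \epsilon_i$; substituting this into $J_i(\persp{u}{i})$ yields $J_i(\persp{u}{i}) = -\tfrac{1}{2}\|\persp{u}{i}_i\|_2^2$, while $u^\circ_i = \persp{u}{i}_i$ gives $J_i(u^\circ) = \tfrac{1}{2}\|\persp{u}{i}_i\|_2^2 - (\persp{u}{i}_i)^\top(\persp{\alpha}{i}P_{i,-}u^\circ + \epsilon_i)$. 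Subtracting and replacing $\|\persp{u}{i}_i\|_2^2$ once more by $(\persp{u}{i}_i)^\top(\persp{\alpha}{i}P_{i,-}\persp{u}{i} + \epsilon_i)$, the $\epsilon_i$ terms cancel and
\[
J_i(u^\circ)-J_i(\persp{u}{i}) = \persp{\alpha}{i}\,(\persp{u}{i}_i)^\top P_{i,-}\big(\persp{u}{i}-u^\circ\big) = \persp{\alpha}{i}\sum_{j\neq i}(\persp{u}{i}_i)^\top P_{i,j}\big(\persp{u}{i}_j-\persp{u}{j}_j\big),
\]
where the $j=i$ term drops because $\persp{u}{i}_i = u^\circ_i$.

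The crux is to express $\persp{u}{i}_j-\persp{u}{j}_j = [(\persp{L}{i}-\persp{L}{j})\epsilon]_j$ for $j\neq i$ via the resolvent identity: with $A := I-\persp{\alpha}{i}P$ and $B := I-\persp{\alpha}{j}P$ one has $A^{-1}-B^{-1} = A^{-1}(B-A)B^{-1}$ and $B-A = (\persp{\alpha}{i}-\persp{\alpha}{j})P$, so $\persp{L}{i}-\persp{L}{j} = (\persp{\alpha}{i}-\persp{\alpha}{j})\persp{L}{i}P\persp{L}{j}$. Substituting this, using $(\persp{u}{i}_i)^\top = \epsilon^\top(\persp{L}{i}_{i,-})^\top$, and folding $\persp{\alpha}{i}P_{i,j} = \persp{P}{i}_{i,j}$ and $\persp{\alpha}{i}\persp{L}{i}P\persp{L}{j} = \persp{L}{i}\persp{P}{i}\persp{L}{j}$ back into matrix form, the sum collapses into $\sum_{j\neq i}\epsilon^\top\mathcal{C}_{i,j}\epsilon$ with $\mathcal{C}_{i,j}$ as in Definition~\ref{def: InteractionGraphCentrality}, which is~\eqref{eq:prop_graph_char}. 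The same conclusion follows by copying the proof of Theorem~\ref{thm:network_agg_bound_shock} almost verbatim, integrating player $i$'s realized cost along the segment in the space of scaling profiles from the profile where every player uses $\persp{\alpha}{i}$ (realized action $\persp{u}{i}$) to the profile $(\persp{\alpha}{1},\dots,\persp{\alpha}{n})$ (realized action $u^\circ$), with $\persp{\alpha}{i}$ held fixed along the path since it enters $J_i$. The one new feature --- and the step I expect to be the main obstacle --- is that, unlike the linear $\epsilon$-dependence exploited in Theorem~\ref{thm:network_agg_bound_shock}, the inner Jacobian $\partial\persp{u}{j}_j/\partial\alpha_j$ is not constant in the integration variable; one must then evaluate $\int_0^1 (I-\alpha_j(s)P)^{-1}P(I-\alpha_j(s)P)^{-1}\,ds$, which telescopes (its integrand is a scalar multiple of $\tfrac{d}{ds}(I-\alpha_j(s)P)^{-1}$) and, since $P$ commutes with every $(I-\alpha P)^{-1}$, again equals $\persp{L}{i}P\persp{L}{j}$. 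The resolvent identity is precisely what dispatches this; the rest is the same bookkeeping as in the earlier proof.

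For the displayed bound, apply Cauchy--Schwarz to each summand of $\sum_{j\neq i}\epsilon^\top\mathcal{C}_{i,j}\epsilon$ together with submultiplicativity of the induced $2$-norm, using that a block row of a matrix has $2$-norm no larger than the whole matrix. The hypothesis $\persp{\alpha}{i}(P+P^\top)/2\prec I$ makes the symmetric part of $I-\persp{\alpha}{i}P$ positive definite, so $I-\persp{\alpha}{i}P$ is invertible with $\|(I-\persp{\alpha}{i}P)^{-1}\|_2 = 1/\sigma_{\rm min}(I-\persp{\alpha}{i}P) > 0$; since $\persp{L}{i}$ appears twice in each summand --- once as $\persp{u}{i}_i = \persp{L}{i}_{i,-}\epsilon$ and once inside $\persp{L}{i}P\persp{L}{j}$ --- and $\persp{L}{j}$ once, this produces the factor $\sigma_{\rm min}(I-\persp{\alpha}{j}P)^{-1}\sigma_{\rm min}(I-\persp{\alpha}{i}P)^{-2}$. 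Bounding the free interior $P$ by $\|P\|_2$, keeping $\|P_{i,j}\|_2$, using $|\persp{\alpha}{i}-\persp{\alpha}{j}|\leq\delta$, and summing over $j\neq i$ then yields the claimed estimate (carrying an overall $\|\epsilon\|_2^2$ from the two copies of $\epsilon$ in the quadratic form).
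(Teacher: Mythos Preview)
Your proposal is correct. The paper proves this by the line-integral template of Theorem~\ref{thm:network_agg_bound_shock}, integrating $J_i(u^\circ(\overline{\alpha}))$ over the segment from the homogeneous profile $\overline{\persp{\alpha}{i}}$ to $\overline{\alpha^\circ}$ and then applying the resolvent identity at the end---precisely the alternative you sketch in your second paragraph, including the observation that the non-constant inner Jacobian telescopes because $P$ commutes with every $(I-\alpha P)^{-1}$. Your primary route, the direct algebraic computation exploiting $u_i^\circ = \persp{u}{i}_i$, is shorter: since the own-action block is unchanged, only the bilinear interaction term survives in $J_i(u^\circ)-J_i(\persp{u}{i})$, and one lands on the resolvent step in one line without any integration. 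Both routes converge at $\persp{L}{i}-\persp{L}{j}=(\persp{\alpha}{i}-\persp{\alpha}{j})\persp{L}{i}P\persp{L}{j}$ and the same Cauchy--Schwarz bounding. The direct route is cleaner and more transparent for this scalar-parameterized case; the line-integral template is what generalizes to the mixed-misspecification setting of Corollary~\ref{cor:both}. Your parenthetical remark that the displayed bound should carry an overall $\|\epsilon\|_2^2$ factor is well taken.
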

\begin{proof}
The proof follows similarly to the proof of \cref{thm:network_agg_bound_shock}, but with modifications for the heterogeneous network models.
    Consider the case where each player $i \in N$ conjectures a game $\persp{G}{i}$ which has adjacency matrix $\persp{\alpha}{i}P$ and predicted disturbances $\epsilon \in \mathbb{R}^m$.
    To capture the profile of all players' conjectured networks, let $\overline{\alpha} \in \mathbb{R}^{n}$ be the vector concatenating each player's respective predicted disturbances.

    We define the realized joint action $u^\circ(\overline{\alpha})$ as a function over the set of prediction profiles to joint actions.
    Note that when $\overline{\alpha} = [\persp{\alpha}{i}\hspace{0pt}^\top,\ldots,\persp{\alpha}{i}\hspace{0pt}^\top]^\top$, then 
    $u^\circ(\overline{\alpha}) = \persp{u}{i}$ due to the uniquenss of equilibria and homogeneity of the conjectured games; we define this specific prediction profile as $\overline{\persp{\alpha}{i}}$ and the realized joint actions from each players' predictions as $\overline{\alpha^\circ} := [\persp{\alpha}{1}\hspace{0pt}^\top,\ldots,\persp{\alpha}{n}\hspace{0pt}^\top]\hspace{0pt}^\top$.
    To characterize the gap in player $i$'s predicted and realized cost, we will take the line integral of $J_i(u^\circ(\overline{\alpha}))$ between $\overline{\persp{\alpha}{i}}$ and $\overline{\alpha^\circ}$.
    We can express this line integral by
\begin{subequations}
\begin{align}
&J_i(u^\circ) - J_i(\persp{u}{i})\label{eq:line_int_graph_a} \\
    &= \int_{s=0}^1 \left(\overline{\alpha^\circ} \textup{-} \overline{\persp{\alpha}{i}}\right)^\top \nabla_{\overline{\alpha}} J_i\left(u^\circ\left((1\textup{-}s)\overline{\persp{\alpha}{i}} \textup{+} s\overline{\alpha^\circ}\right)\right) ds \label{eq:line_int_graph_b}\\
    &= \int_{s=0}^1 \left(\overline{\alpha^\circ} \textup{-} \overline{\persp{\alpha}{i}}\right)^\top \nonumber \\
    &\hspace{-4pt}\Bigg(\hspace{-5pt}\begin{bmatrix}
        \frac{\partial}{\partial\alpha} \persp{u_1}{1}\left((1\textup{-}s)\persp{\alpha}{i} \textup{+} s\persp{\alpha}{1}\right)^\top\hspace{-3pt} \nabla_{u_1} \hspace{-2pt}J_i\left(u^\circ\left((1\textup{-}s)\overline{\persp{\alpha}{i}} \textup{+} s\overline{\alpha^\circ}\right)\hspace{-1pt}\right)\\
        \vdots\\
        \frac{\partial}{\partial\alpha} \persp{u_n}{n}\left((1\textup{-}s)\persp{\alpha}{i} \textup{+} s\persp{\alpha}{n}\right)^\top \hspace{-3pt} \nabla_{u_n}\hspace{-2pt} J_i\left(u^\circ\left((1\textup{-}s)\overline{\persp{\alpha}{i}} \textup{+} s\overline{\alpha^\circ}\right)\hspace{-1pt}\right)
    \end{bmatrix} \nonumber\\
    &+ \nabla_{\overline{\alpha}}J_i\left(u^\circ\left((1\textup{-}s)\overline{\persp{\alpha}{i}} \textup{+} s\overline{\alpha^\circ}\right);(1\textup{-}s)\overline{\persp{\alpha}{i}} \textup{+} s\overline{\alpha^\circ}\right) \Bigg) ds \label{eq:line_int_graph_c}\\
    &= \sum_{j\neq i} (\persp{\alpha}{i}\textup{-}\persp{\alpha}{j}) \epsilon^\top\label{eq:line_int_graph_d}\\
    &\cdot\left[\int_{s=0}^1 \frac{\partial}{\partial \alpha} \left(I- ((1\textup{-}s)\overline{\persp{\alpha}{i}} \textup{+} s\overline{\alpha^\circ})P \right)^{-1} ds\right]_{j,-}^\top \persp{\alpha}{i}P_{i,j}^\top \persp{u}{i}_i\nonumber\\
    &\hspace{-4pt}= \textup{-}\sum_{j\neq i} \persp{\alpha}{i}\epsilon^\top\left[(I-\persp{\alpha}{j}P)^{-1} - (I-\persp{\alpha}{i}P)^{-1}\right]_{j,-}^\top P_{i,j}^\top \persp{u}{i}_i \label{eq:line_int_graph_e}\\
    &=\sum_{j\neq i} \persp{\alpha}{i}(\persp{\alpha}{j} \textup{-} \persp{\alpha}{i})\epsilon^\top\nonumber\\
    &\cdot\left[(I-\persp{\alpha}{j}P)^{-1} P (I-\persp{\alpha}{i}P)^{-1}\right]_{j,-}^\top P_{i,j}^\top \persp{u}{i}_i \label{eq:line_int_graph_f},
\end{align}
\end{subequations}
where \eqref{eq:line_int_graph_b} and \eqref{eq:line_int_graph_c} hold from integrating the directional derivative of $J_i$ over the line segment in the $\overline{\alpha}$ space between $\overline{\persp{\alpha}{i}}$ and $\overline{\alpha^\circ}$. 
\eqref{eq:line_int_graph_d} holds from $\nabla_{\overline{\alpha}(j)}J_i(u;\overline{\alpha}) = 0$ for all $j \neq i$ and the entry where $j = i$ is multiplied by $\overline{\alpha^\circ} - \overline{\persp{\alpha}{i}} = 0$ and from $\nabla_{u_j}J(u;\alpha) = -\persp{\alpha}{i}P_{i,j}^\top\epsilon$.
\eqref{eq:line_int_graph_e}
\eqref{eq:line_int_graph_f}
by using the Resolvent Identity,
$(I-P)^{-1} - (I-B)^{-1} = (I-P)^{-1}(B-P)(I-B)^{-1}$.
We recover \eqref{eq:prop_graph_char} by transposing, substituting $\persp{u_i}{i} = L_{i,-}^\top\epsilon$, and grouping each $P_{i,j}$ with $\persp{\alpha}{j}-\persp{\alpha}{i}$ to get $\persp{P}{j}-\persp{P}{i}$ and $P$ with $\persp{\alpha}{i}$ to get $\persp{P}{i}$.

To complete the proof, take the 2-norm of \eqref{eq:line_int_graph_f}, apply Cauchy-Schwarz inequality and invoke the property that the induced two norm of block elements of a matrix such as $P_{i,-}$ is upper bounded by the induced two-norm of the full matrix $P$.
Finally, upper bound $|\persp{\alpha}{i}-\persp{\alpha}{j}|$ by $\delta$.
\end{proof}

\begin{remark}
    Although Proposition \ref{eq:prop_graph_char} is stated under the assumption \(P^{(i)} = \alpha^{(i)}P\), an analogous result can be established even without this restriction. 
\end{remark}

\begin{figure}
    \centering
    \includegraphics[width=0.9\linewidth]{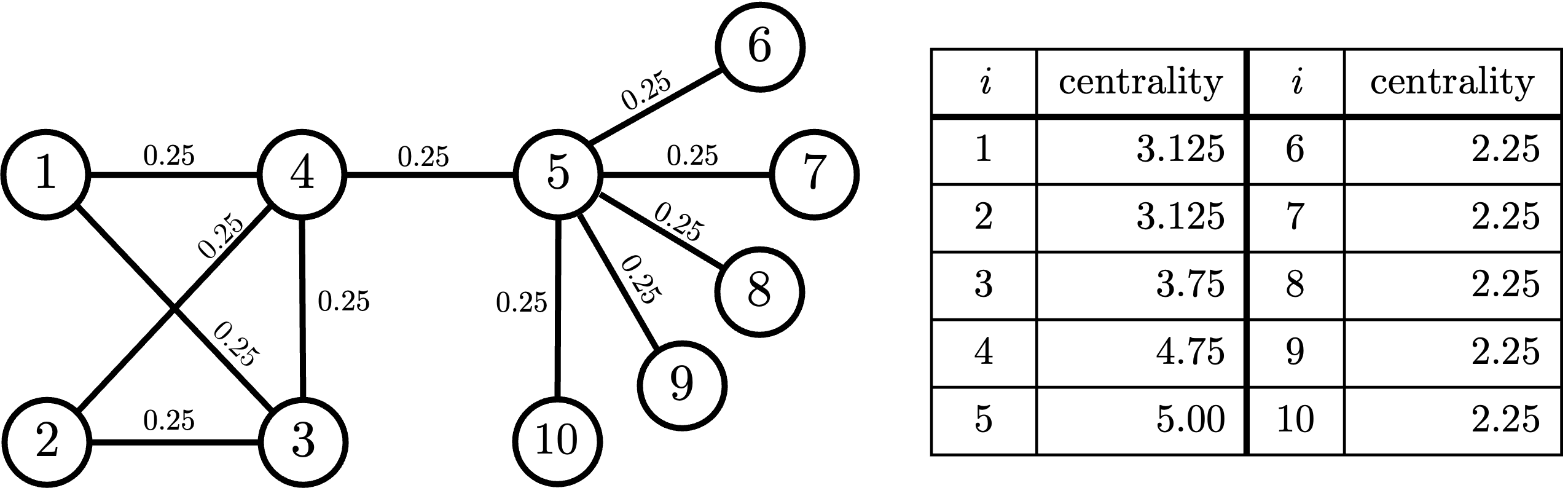}
    \caption{Example where performance gap is bigger between two players with similar network effects but smaller Bonacich {centrality} (i.e., 5 has the highest centrality, but 4 has worse loss if misaligned with 3)}
    \label{fig:network_example}
\end{figure}

\iftrue

\begin{figure*}[t!]
    \centering
    \begin{subfigure}[t]{0.32\textwidth}
        \centering
        \includegraphics[width=0.95\linewidth]{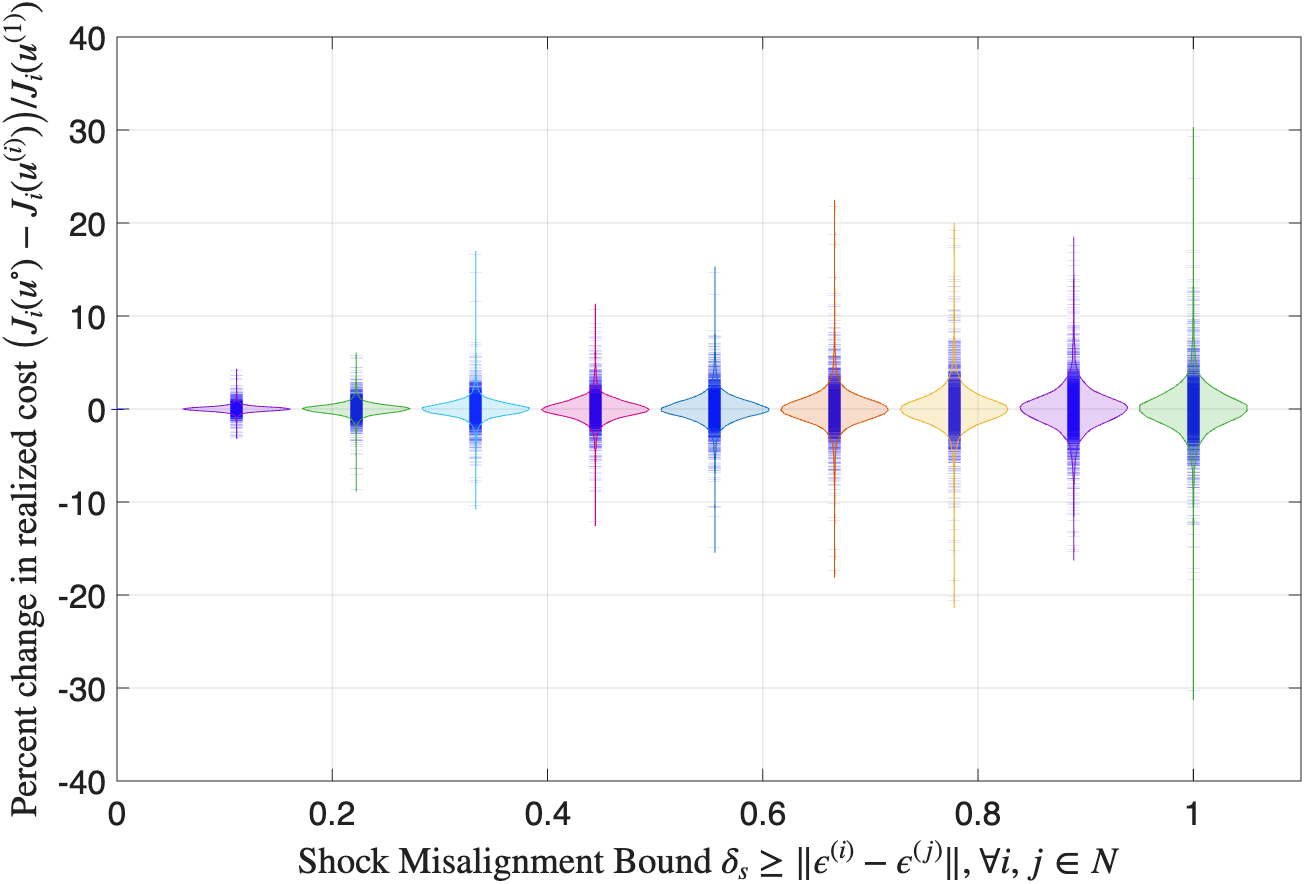}
        \caption{Relative change from predicted to realized cost with heterogeneous predicted shocks $\epsilon$.}
        \label{fig:ShockRel}
    \end{subfigure}%
    ~ 
    \begin{subfigure}[t]{0.32\textwidth}
        \centering
        \includegraphics[width=0.95\linewidth]{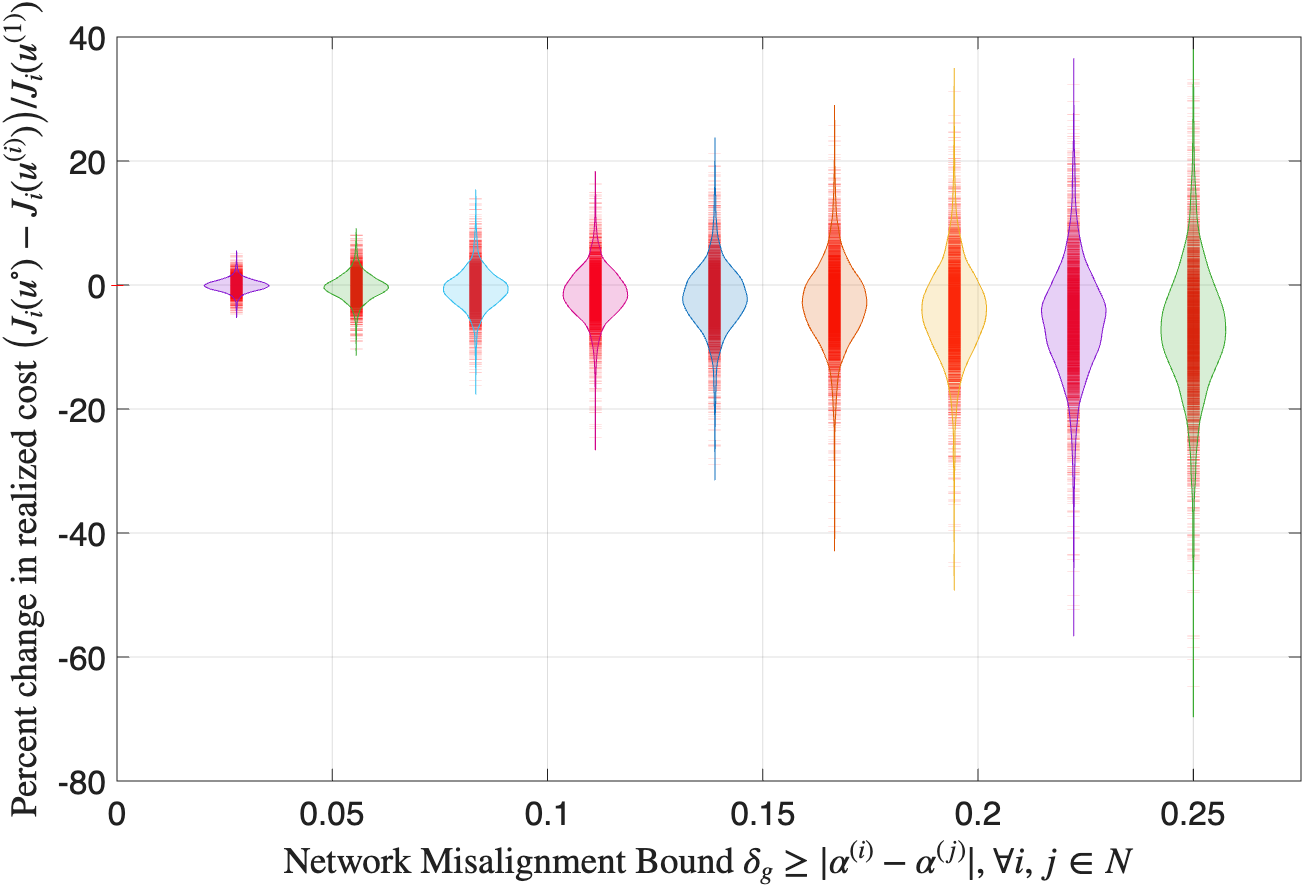}
        \caption{Relative change from predicted to realized cost with heterogeneous conjectured network $P$.}
        \label{fig:GraphRel}
    \end{subfigure}
    ~
    \begin{subfigure}[t]{0.32\textwidth}
        \centering
        \includegraphics[width=0.95\linewidth]{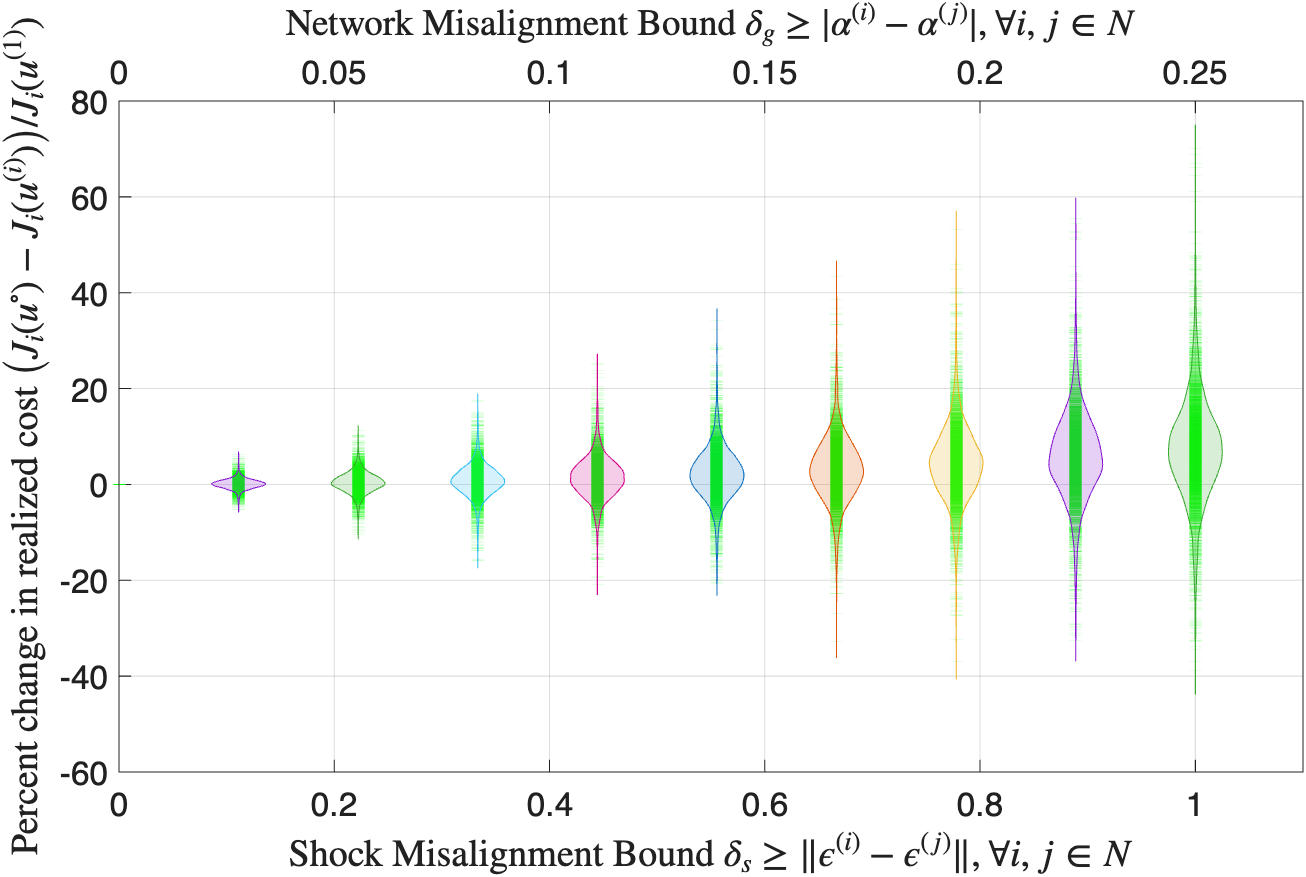}
        \caption{Relative change from predicted to realized cost with heterogeneous predicted shocks $\epsilon$ and network $P$.}
        \label{fig:BothRel}
    \end{subfigure}
    \caption{Bean plots of the Monte-Carlo simulation of network aggregative games with misconjectures. In each case, the type of possible misconjecture and thus misalignment between players varies. Along each horizontal axis is the bound on the level of respective misalignment. As the misalignment increases, the chance spread of the relative difference in predicted and realized cost grows.}
\end{figure*}

\subsubsection{Network and Shock Misalignment}
Combining the two previous cases gives the setting where players may simultaneously misconjecture about the network shock and the interaction network.
\begin{corollary}\label{cor:both}
In quadratic network aggregative games where each player $i \in N$ conjectures a network $\persp{P}{i} = \persp{\alpha}{i}P$ with $\persp{\alpha}{i} > 0$ and $\persp{\alpha}{i}(P+P^\top)/2 \prec I$, and shock to the network $\persp{\epsilon}{i} \in \mathbb{R}^m$, then the \gapname~for player $i$ is
    \begin{align*}
        &J_i(u^\circ) - J_i(\persp{u}{i}) =\\
        &\sum_{j \neq i} \persp{\epsilon}{i}\hspace{0pt}^\top L_{i,-}^\top P_{i,j}\left[\frac{\int_{x = \persp{\alpha}{i}}^{\persp{\alpha}{j}} L(x) dx}{\persp{\alpha}{j}-\persp{\alpha}{i}}\right]_{j,-}(\persp{\epsilon}{i}-\persp{\epsilon}{j})\\
        &+ \persp{\alpha}{i}(\persp{\alpha}{i}-\persp{\alpha}{j})\Big[ \frac{L(\persp{\alpha}{j})\persp{\epsilon}{j} - L(\persp{\alpha}{i})\persp{\epsilon}{i}}{\persp{\alpha}{j} - \persp{\alpha}{i}} \\
        &- \frac{\int_{x=\persp{\alpha}{i}}^{\persp{\alpha}{j}} L(x)dx}{(\persp{\alpha}{j} - \persp{\alpha}{i})^2} (\persp{\epsilon}{j}-\persp{\epsilon}{i}) \Big]_{j,-}^\top P_{i,j}^\top L(\persp{\alpha}{i})\persp{\epsilon}{i}
    \end{align*}
    where $L(x) = (I-xP)^{-1}$.
    Further, if $P$ invertible, then $\int_{x=\persp{\alpha}{i}}^{\persp{\alpha}{j}} L(x)dx = P^{-1}(L(\persp{\alpha}{i})-L(\persp{\alpha}{j})$.
\end{corollary}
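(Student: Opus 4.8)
The plan is to replay the line-integral computation behind \cref{thm:network_agg_bound_shock} and \cref{prop:network_agg_bound_net}, now carried out over the \emph{joint} parameter profile in which every player $j$ carries the pair $(\persp{\alpha}{j},\persp{\epsilon}{j})$. Concatenating these into $(\overline{\alpha},\overline{\epsilon})\in\mathbb{R}^{n}\times\mathbb{R}^{nm}$, let $u^\circ(\overline{\alpha},\overline{\epsilon})$ denote the realized joint action and integrate $J_i$ along the segment from the homogeneous profile $(\overline{\persp{\alpha}{i}},\overline{\persp{\epsilon}{i}})$---at which $u^\circ=\persp{u}{i}$ by uniqueness of equilibria---to the realized profile $(\overline{\alpha^\circ},\overline{\epsilon^\circ})$. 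Along this segment the $i$-th block of both $\overline{\alpha}$ and $\overline{\epsilon}$ stays pinned at $(\persp{\alpha}{i},\persp{\epsilon}{i})$, so $J_i$'s own primitives never move and the block $u^\circ_i=\persp{u_i}{i}$ is frozen; only the blocks $u^\circ_j=\persp{u_j}{j}$ with $j\neq i$ vary, through player $j$'s interpolated conjecture $\alpha_j(s)=(1-s)\persp{\alpha}{i}+s\persp{\alpha}{j}$, $\epsilon_j(s)=(1-s)\persp{\epsilon}{i}+s\persp{\epsilon}{j}$. Using $\nabla_{u_j}J_i(u)=-\persp{\alpha}{i}P_{i,j}^\top u_i$, writing $L(x)=(I-xP)^{-1}$ so that $u^\circ_j(s)=[\,L(\alpha_j(s))\epsilon_j(s)\,]_j$, and invoking the chain rule, the line integral collapses to $J_i(u^\circ)-J_i(\persp{u}{i})=-\persp{\alpha}{i}\sum_{j\neq i}(\persp{u_i}{i})^\top P_{i,j}\,\big[\int_{0}^{1}\tfrac{d}{ds}\big(L(\alpha_j(s))\epsilon_j(s)\big)\,ds\big]_j$---the point at which the two earlier proofs branch, and where, rather than telescope the integral, I would split its integrand.

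Applying the product rule gives $\tfrac{d}{ds}\big(L(\alpha_j(s))\epsilon_j(s)\big)=\big(\tfrac{d}{ds}L(\alpha_j(s))\big)\epsilon_j(s)+L(\alpha_j(s))(\persp{\epsilon}{j}-\persp{\epsilon}{i})$, with $\tfrac{d}{ds}L(\alpha_j(s))=(\persp{\alpha}{j}-\persp{\alpha}{i})L(\alpha_j(s))PL(\alpha_j(s))$. Integrating the second term and changing variables $x=\alpha_j(s)$ produces the averaged Leontief matrix $\bar L:=\tfrac{1}{\persp{\alpha}{j}-\persp{\alpha}{i}}\int_{\persp{\alpha}{i}}^{\persp{\alpha}{j}}L(x)\,dx$ acting on $\persp{\epsilon}{j}-\persp{\epsilon}{i}$; this is exactly the shock term of the claimed identity, and it degenerates to the operator $\mathcal{B}_{i,j}$ (recovering \cref{thm:network_agg_bound_shock}, eq.~\eqref{eq:thm_bound}) when every $\persp{\alpha}{j}=1$, since then $\bar L=(I-P)^{-1}=L$. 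For the first term, integration by parts yields $\int_0^1\big(\tfrac{d}{ds}L(\alpha_j(s))\big)\epsilon_j(s)\,ds=\big(L(\persp{\alpha}{j})\persp{\epsilon}{j}-L(\persp{\alpha}{i})\persp{\epsilon}{i}\big)-\bar L\,(\persp{\epsilon}{j}-\persp{\epsilon}{i})$, and after pulling out the scalars $\persp{\alpha}{i}(\persp{\alpha}{i}-\persp{\alpha}{j})$ and rewriting $\bar L/(\persp{\alpha}{j}-\persp{\alpha}{i})=\int_{\persp{\alpha}{i}}^{\persp{\alpha}{j}}L(x)\,dx/(\persp{\alpha}{j}-\persp{\alpha}{i})^2$, this is precisely the network bracket in the corollary; setting $\persp{\epsilon}{i}=\persp{\epsilon}{j}$ and invoking the Resolvent Identity $L(\persp{\alpha}{i})-L(\persp{\alpha}{j})=(\persp{\alpha}{j}-\persp{\alpha}{i})L(\persp{\alpha}{i})PL(\persp{\alpha}{j})$ collapses it to $\mathcal{C}_{i,j}$ (recovering \cref{prop:network_agg_bound_net}, eq.~\eqref{eq:prop_graph_char}). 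Reassembling the two pieces, substituting $\persp{u_i}{i}=L(\persp{\alpha}{i})_{i,-}\persp{\epsilon}{i}$, and reading off the $j$-th block of each bracketed term gives the stated formula. The closing remark is the elementary fact that $\tfrac{d}{dx}L(x)=L(x)PL(x)=P\,L(x)^2$ because the resolvent commutes with $P$, so the fundamental theorem of calculus removes one factor of $P$ from an integrated resolvent whenever $P$ is invertible.

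I do not expect a substantive obstacle: once the joint line integral is in place, what remains is the product rule, one change of variables, one integration by parts, and the Resolvent Identity. The genuine care points are organizational---keeping the scalar prefactors $\persp{\alpha}{i}$ and $(\persp{\alpha}{i}-\persp{\alpha}{j})$ and the normalizations $1/(\persp{\alpha}{j}-\persp{\alpha}{i})$ straight, consistently extracting the $j$-th block from each term, and confirming that $\alpha(P+P^\top)/2\prec I$ persists along the whole segment (it does, since $\alpha(s)$ lies between $\persp{\alpha}{i}$ and $\persp{\alpha}{j}$ and the binding scalar constraint $\alpha\,\lambda_{\max}((P+P^\top)/2)<1$ is preserved under convex combination), so that $u^\circ(\overline{\alpha},\overline{\epsilon})$ is well-defined and the integrand smooth throughout. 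Checking that the two embedded limits reproduce $\mathcal{B}_{i,j}$ and $\mathcal{C}_{i,j}$ then serves as a built-in consistency test on the bookkeeping.
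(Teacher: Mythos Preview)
Your proposal is correct and follows the same approach the paper indicates: the paper's entire argument for this corollary is the single sentence ``The expression is garnered by evaluating the line integral over the segment of $[\overline{\epsilon}^\top,\overline{\alpha}^\top]^\top$,'' and your plan is precisely that joint line integral, with the product rule, change of variables, and integration by parts supplying the decomposition into the two displayed terms. Your built-in sanity checks (recovering $\mathcal{B}_{i,j}$ and $\mathcal{C}_{i,j}$ in the limits) and the observation that $\alpha_j(s)$ remains in the admissible interval are exactly the care points the paper leaves implicit.
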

The expression is garnered by evaluating the line integral over the segment of $[\overline{\epsilon}^\top,\overline{\alpha}^\top]^\top$.
Though it is difficult to draw intuitive insights from the expression in \cref{cor:both}, when each $\persp{\alpha}{i} = \persp{\alpha}{j}$ or each $\persp{\epsilon}{i} = \persp{\epsilon}{j}$, the expression reduces to the respective expressions in \cref{thm:network_agg_bound_shock} and Proposition~\ref{prop:network_agg_bound_net}.
In \cref{fig:BothRel}, we investigate the magnitude of this performance gap numerically.

\fi


\subsection{Numerical Examples}\label{subsec:num}
In this section, we seek to realize the observations and intuition generated in \cref{subsec:cons} and \cref{subsec:characterization} with real examples as well as demonstrate the prominence of this form of performance loss via Monte-Carlo Simulation.
First, we provide two examples that demonstrate the observation that the centrality of individual nodes/agents is not a determining factor on the impact of misaligned conjectures.

\subsubsection{Misaligned Shocks Example}
Consider the graph as shown in \cref{fig:network_example}.
We will consider a network game where each player has a scalar action and is associated with a single node, i.e., $u_i \in \mathbb{R}$ for all $i \in N$.
In the table in \cref{fig:network_example} is the Bonacich centrality of each node.
From the perspective of Agent 4, prior intuition would lead one to believe that Agent 5 would have the biggest impact on their performance, due to its maximal centrality.
We will illustrate two cases to show this is not true.

\noindent\textbf{Case 1:}
$\persp{\epsilon}{i} = \mathbf{1}_{10\times 1}$ for all $i \in N \setminus \{4\}$,
$\persp{\epsilon}{4} = [1,1,1,1,2,1,1,1,1,1]^\top$,
$J_4(u^\circ)-J_4(\persp{u}{4}) = 3.111$.

\noindent\textbf{Case 2:}
$\persp{\epsilon}{i} = \mathbf{1}_{10\times 1}$ for all $i \in N \setminus \{4\}$,
$\persp{\epsilon}{4} = [1,1,2,1,1,1,1,1,1,1]^\top$,
$J_4(u^\circ)-J_4(\persp{u}{4}) = 3.783$.

When juxtaposed, the above two cases demonstrate that agent 4 having an incorrect conjecture about the shock at node 3 is more impactful than having the same incorrect information about a shock at node 5 (the maximally central node).
This example realizes the observation that the \gapname~does not occur simply from individual or cumulative centrality, but rather the agents' overlap in the networks, as formalized by \cref{thm:network_agg_bound_shock}.
Such insights can prove valuable when agents must invest to acquire information to create more reliable predictions, to ensure said investment is done in the right manner.
Similar examples exist for the case where agents are misaligned on their conjectured adjacency matrices.

\subsubsection{Monte Carlo Simulation}
In \cref{fig:ShockRel} and \cref{fig:GraphRel}, we plot the results of two Monte-Carlo simulations parameterized by an upper bound on agents' conjecture misalignment of predicted shocks and network structure, respectively.

In \cref{fig:ShockRel}, we generate a network $P$ uniformly at random, remove diagonal block components, and scale to possess a maximum singular value of 0.75.
We generate each player's predicted shock uniformly at random between 0 and 1, then normalize such that the largest 2-norm difference is equal to $\delta_s$, and we then shift them by 1.
In \cref{fig:GraphRel}, we use the same network $P$ and one of the generated predicted shocks $\epsilon$, while generating a scaling factor $\persp{\alpha}{i}$ uniformly at random between $1-\delta_g/2$ and $1+\delta_g/2$ for each $i \in N$.
In \cref{fig:BothRel}, we consider both randomly generated predicted shocks bounded by $\delta_s$ and conjectured networks whose scalars are bounded by $\delta_g$.

Each plot shows the anticipated trend that larger misalignments produce larger \gapname~and that the relative gain or loss in predicted cost can be significant.
A more surprising observation is that in \cref{fig:GraphRel} as $\delta_s$, the bound on conjectured network misalignment, grows, there is a slight downward trend in the median realized cost difference; stated differently, though the spread of possible gain or loss in predicted cost increases, the likelyhood of a player doing better in the realized performance than their predicted seems to grow.
Surprisingly, the opposite is true in the case we consider both network and predicted shock misalignment in \cref{fig:BothRel}, implying that an agent may be more likely to experience greater realized cost than predicted when both forms of misalignment are present.
Though \cref{cor:both} fails to provide immediate intuition, it is clear that the case where multiple forms of conjectures are significant and worthy of greater study.

\section{Conclusion}
In this work, we introduce the \emph{game-to-real gap} as a metric to quantify the impact of model misspecification in game-theoretic settings, and analyze this metric in the context of network games. Future work will investigate on explicit characterization game-to-real gap in different class of games. Additionally, it will be interesting to investigate how an agent may seek to alter or pre-design their model to garner more robust solutions. Such an approach introduces a trade-off between nominal and robust performance. Additionally, one may apply this perspective of robustness to adaptive or data-driven decision-making by complementing these results with convergence rates or complexity bounds when conjectures are refined based on past observations.

\bibliography{refs, references_Bryce}
\bibliographystyle{IEEEtran}

\appendix
\noindent{Proof of Proposition~\ref{prop:unbounded_shock}} - 
We prove the claim by the construction of a game $G$ with three players, each with one action.
Let $\gamma \in (0,1)$, and define
$$P = \begin{bmatrix} 0 & 0 & \gamma \\ \gamma & 0 & 0 \\ 0 & \gamma & 0 \end{bmatrix},~
\persp{\epsilon}{1} = \begin{bmatrix}
    \beta\\
    1\\
    1
\end{bmatrix},~
\persp{\epsilon}{2} = \begin{bmatrix}
    1\\
    \beta\\
    1
\end{bmatrix},~
\persp{\epsilon}{3} = \begin{bmatrix}
    1\\
    1\\
    \beta
\end{bmatrix},$$
where $\beta \in (\max\{1-\delta/\sqrt{2},0\}, 1)$ is some parameter in an allowable region such that the misalignment of players predictions satisfy the supposition.
The Leontief matrix is
$$(I-P)^{-1} = \frac{1}{1-\gamma^3}\begin{bmatrix} 1 & \gamma^2 & \gamma \\ \gamma & 1 & \gamma^2 \\ \gamma^2 & \gamma & 1 \end{bmatrix},$$
from this, \eqref{eq:net_agg_equilibrium}, and the symmetry of the game, for any $i \in N$,
\begin{align*}
    J_i(u^\circ) - J_i(\persp{u}{i}) &= \persp{u}{i}_iP
    _{i,-}(\persp{u}{i} - u^\circ)\\
    &= \frac{\gamma(\gamma^2+\gamma+\beta)(1-\beta)(1-\gamma^2)}{(1-\gamma^3)^2},
\end{align*}
which is positive for $\beta \in (0,1)$ and satsifies $\lim_{\gamma \nearrow 1} \frac{\gamma(\gamma^2+\gamma+\beta)(1-\beta)(1-\gamma^2)}{(1-\gamma^3)^2} = \infty$.
\hfill\qed

\noindent\emph{Proof of Proposition~\ref{prop:unbounded_graph}} - 
Define 
\begin{align*}
P^{(1)} = \begin{bmatrix} 0 & 0 & 1-\frac{\delta}{\sqrt{2}} \\ 1  & 0 & 0 \\ 0 & 1  & 0 \end{bmatrix}&, P^{(2)} = \begin{bmatrix} 0 & 0 & 1 \\ 1-\frac{\delta}{\sqrt{2}}  & 0 & 0 \\ 0 & 1  & 0 \end{bmatrix}, \\ 
P^{(3)} &= \begin{bmatrix} 0 & 0 & 1 \\ 1  & 0 & 0 \\ 0 & 1-\frac{\delta}{\sqrt{2}}  & 0 .\end{bmatrix}
\end{align*}
Note that \(\|P^{(i)}-P^{(j)}\|_F = \delta.\)

Note that 
\begin{align*}
(I-P^{(1)})^{-1} = \frac{\sqrt{2}}{\delta}\begin{bmatrix}
        1 & 1-\frac{\delta}{\sqrt{2}}  & 1-\frac{\delta}{\sqrt{2}}  \\ 
         1  &  1 & 1-\frac{\delta}{\sqrt{2}}  \\ 
         1 & 1  & 1 
    \end{bmatrix},
\\ 
    (I-P^{(2)})^{-1} = \frac{\sqrt{2}}{\delta}\begin{bmatrix}
        1 & 1  & 1 \\ 
         1-\frac{\delta}{\sqrt{2}}  &  1 & 1-\frac{\delta}{\sqrt{2}} \\ 
         1-\frac{\delta}{\sqrt{2}} & 1  & 1 
    \end{bmatrix}, \\
    (I-P^{(3)})^{-1} = \frac{\sqrt{2}}{\delta}\begin{bmatrix}
        1 & 1-\frac{\delta}{\sqrt{2}}  & 1 \\ 
         1  &  1 & 1 \\ 
         1-\frac{\delta}{\sqrt{2}} & 1-\frac{\delta}{\sqrt{2}}  & 1
    \end{bmatrix}. 
\end{align*}


Define \(\epsilon = [\gamma,-\gamma, -\gamma]^{\top}\). Consequently, 
\begin{align*}
    u^{(1)} = \gamma \frac{\sqrt{2}}{\delta}\begin{bmatrix}
        \sqrt{2\delta}-1 \\ \frac{\delta}{\sqrt{2}}-1\\-1
    \end{bmatrix}, u^{(2)} = \gamma \frac{\sqrt{2}}{\delta}\begin{bmatrix}
       -1 \\ -1\\-1 -\frac{\delta}{\sqrt{2}}
    \end{bmatrix}, \\ 
     u^{(3)} = \gamma \frac{\sqrt{2}}{\delta}\begin{bmatrix}
       -1+  \frac{\delta}{\sqrt{2}} \\-1 \\ -1
    \end{bmatrix},  u^{\circ} = \gamma \frac{\sqrt{2}}{\delta}\begin{bmatrix}
        \sqrt{2\delta}-1 \\ -1\\-1
    \end{bmatrix}.
\end{align*}

Next, we compute for \(i =2.\)
\begin{align*}
    &J_2(u^\circ) - J_2(u^{(2)}) = u_2^{(2)}\left(\left(P^{(2)}\persp{u}{2}\right)_2 - \left(P^{(2)}u^\circ\right)_2\right)
    \\
    &=2\gamma^2(\sqrt{2}-\delta)/\delta
\end{align*}
Selecting $\delta \in (0,\sqrt{2})$ makes the gap positive and quadratic in $\gamma$.
To conclude, as \(\gamma\) increases the gap will increase. 


\hfill\qed

\end{document}